\newcommand{\todo}[1]{}
\newcommand{\AP}{\textit{AP} }
\newcommand{\A}[0]{\mathcal A}
\newcommand{\T}[0]{\mathcal T}
\newcommand{\N}[0]{\mathcal N}
\newcommand{\D}[0]{\mathcal D}
\newcommand{\E}[0]{\mathcal E}
\newcommand{\nats}{\mathbb{N}}
\newcommand{\outcomes}[1]{L(#1)}
\newcommand{\outcomesE}[1]{L_{\E}(#1)}
\newcommand{\rel}{\mathit{rel}}
\newcommand{\gr}{\mathit{g}}
\begin{document}

\mainmatter  % start of an individual contribution

% first the title is needed
\title{Synthesizing Approximate Implementations for Unrealizable Specifications\thanks{This work was partially supported by the German Research Foundation (DFG) as part of the Collaborative Research Center ``Foundations of Perspicuous Software Systems'' (TRR 248, 389792660), and by the European Research Council (ERC) Grant OSARES (No. 683300).}
}

% a short form should be given in case it is too long for the running head
\titlerunning{Synthesizing Approximate Implementations}

% the name(s) of the author(s) follow(s) next
%
% NB: Chinese authors should write their first names(s) in front of
% their surnames. This ensures that the names appear correctly in
% the running heads and the author index.
%
\author{Rayna Dimitrova \inst{1}, Bernd Finkbeiner \inst{2} \and Hazem Torfah \inst{2}}
\authorrunning{}
% (feature abused for this document to repeat the title also on left hand pages)

% the affiliations are given next; don't give your e-mail address
% unless you accept that it will be published
\institute{University of Leicester \and Saarland University}

%
% NB: a more complex sample for affiliations and the mapping to the
% corresponding authors can be found in the file "llncs.dem"
% (search for the string "\mainmatter" where a contribution starts).
% "llncs.dem" accompanies the document class "llncs.cls".
%

\toctitle{Lecture Notes in Computer Science}
\tocauthor{Authors' Instructions}
\maketitle

\begin{abstract}
  The unrealizability of a specification is often due to the assumption that the behavior of the environment is unrestricted. In this paper, we present algorithms for synthesis in bounded environments, where the environment can only generate input sequences that are ultimately periodic words (lassos) with finite representations of bounded size.  We provide automata-theoretic and symbolic approaches for solving this synthesis problem, and also study the synthesis of approximative implementations from unrealizable specifications. Such implementations may violate the specification in general, but are guaranteed to satisfy the specification on at least a specified portion of the bounded-size lassos. We evaluate the algorithms on different arbiter specifications.  \looseness=-1
\end{abstract}

\section{Introduction}\label{sec:intro}
The objective of reactive synthesis is to automatically construct an implementation of a reactive system from a high-level specification of its desired behaviour. While this idea holds a great promise, applying synthesis in practice often faces significant challenges. One of the main hurdles is that the system designer has to provide the right formal specification, which is often a difficult task~\cite{SpecificationBottleneckRozier16}. In particular, since the system being synthesized is required to satisfy its requirements against all possible environments allowed by the specification, accurately capturing the designer's knowledge about the environment in which the system will execute is crucial for being able to successfully synthesize an implementation. 

Traditionally, environment assumptions are included in the specification, usually given as a temporal logic formula. There are, however less explored ways of incorporating information about the environment, one of which is to consider a \emph{bound on the size of the environment}, that is, a bound on the size of the state space of a transition system that describes the possible environment behaviors. Restricting the space of possible environments can render an unrealizable specification into a realizable one. The temporal synthesis under such bounded environments was first studied in~\cite{BoundedSystemsAndEnvironments}, where the authors extensively study the problem, in several versions, from the complexity-theoretic point of view.
\looseness=-1

In this paper, we follow a similar avenue of providing environment assumptions. However, instead of bounding the size of the state space of the environment, we associate a bound with the sequences of values of input signals produced by the environment. The infinite input sequences produced by a finite-state environment which interacts with a finite state system are ultimately periodic, and thus, each such infinite sequence $\sigma \in \Sigma_I^\omega$, over the input alphabet $\Sigma_I$, can be represented as a \emph{lasso}, which is  a pair $(u,v)$ of finite words $u\in \Sigma^{*}_I$ and $v \in \Sigma^{+}_I$, such that $\sigma=u\cdot v^\omega$. It is the length of such sequences that we consider a bound on. More precisely, given a bound $k \in \nats$, we consider the language of all infinite sequences sequences of inputs that can be represented by a lasso $(u,v)$ with $|u \cdot v| = k$. The goal of the \emph{synthesis of lasso precise implementations} is then to synthesize a system for which each execution resulting from a sequence of environment inputs in that language, satisfies a given linear temporal specification.

As an example, consider an arbiter serving two client processes. Each client issues a request when it wants to access a shared resource, and keeps the request signal up until it is done using the resource. The goal of the arbiter is to ensure the classical mutual exclusion property, by not granting access to the two clients simultaneously. The arbiter has to also ensure that each client request is eventually granted. This, however, is difficult since, first, a client might gain access to the resource and never lower the request signal, and second, the arbiter is not allowed to take away a grant unless the request has been set to false, or the client never sets the request to false in the future (the client has become unresponsive). The last two requirements together make the specification unrealizable, as the arbiter has no way of determining if a client has become unresponsive, or will lower the request signal in the future. If, however, the length of the lassos of the input sequences is bounded, then, after a sufficient number of steps, the arbiter can assume that if the request has not been set to false, then it will not be lowered in the future either, as the sequence of inputs must already have run at least once through it's period that will be ultimately repeated from that point  on. \looseness=-1

Formally, we can express the requirements on the arbiter in Linear Temporal Logic (LTL) as follows. There is one input variable $r_i$ (for {\it request}) and one output variable $\gr_i$ (for grant) associated with each client.
The specification is then given as the conjunction
$\varphi_=\varphi_{mutex} \wedge \varphi_{resp} \wedge \varphi_{rel}$
where we use the LTL operators Next $\LTLnext$, Globally $\LTLglobally$ and Eventually  $\LTLfinally$  to define the requirements
\[
\begin{array}{lll}
\varphi_{mutex} & = & \LTLglobally \neg(g_1 \wedge g_2),\\
\varphi_{resp} & = & \LTLglobally \bigwedge_{i=1}^2(r_i \rightarrow \LTLfinally g_i),\\
\varphi_{rel} & = & \LTLglobally \bigwedge_{i=1}^2(\gr_i \wedge  r_i \wedge (\LTLfinally \neg r_i) \rightarrow \LTLnext g_i).\\
\end{array}
\]
Due to the requirement to not revoke grants stated in $\varphi_\rel$, the specification $\varphi$ is unrealizable (that is, there exists no implementation for the arbiter process). For any bound $k$ on the length of the input lassos, however, $\varphi$ is realizable. More precisely, there exists an implementation in which once client $i$ has not lowered the request signal for $k$ consecutive steps, the variable $g_i$  is set to false.

This example shows that when the system designer has knowledge about the resources available to the environment processes, taking this knowledge into account can enable us to synthesize a system that is correct under this assumption.
 
In this paper we formally define the synthesis problem for \emph{lasso-precise implementations}, that is, implementations that are correct for input lassos of bounded size, and describe an automata-theoretic approach to this synthesis problem. We also consider the synthesis of \emph{lasso-precise implementations of bounded size},  and provide a symbolic synthesis algorithm based on quantified Boolean satisfiability. 

Bounding the size of the input lassos can render some unrealizable specifications realizable, but, similarly to bounding the size of the environment, comes at the price of higher computational complexity. To alleviate this problem, we further study the synthesis of \emph{approximate implementations}, where we relax the synthesis problem further, and only require that for a given $\epsilon>0$ the ratio of input lassos of a given size for which the specification is satisfied, to the total number of input lassos of that size is at least $1-\epsilon$. We then propose an \emph{approximate synthesis method} based on maximum  model counting for Boolean formulas~\cite{Fremont:EECS-2016-169}. The benefits of the approximate approach are  two-fold. Firstly, it can often deliver high-quality approximate solutions more efficiently than the lasso-precise synthesis method, and secondly, even when the specification is still unrealizable for a given lasso bound, we might be able to synthesize an implementation that is correct for a given fraction of the possible input lassos.

The rest of the paper is organized as follows. In Section 2 we discuss related work on environment assumptions in synthesis. In Section 3 we provide preliminaries on linear temporal properties and omega-automata. In Section 4 we define the synthesis problem for lasso-precise implementations, and  describe an automata-theoretic synthesis algorithm. In Section 5 we study the synthesis of lasso-precise implementations of bounded size, and provide a reduction to quantified Boolean satisfiability. In Section 6 we define the approximate version of the problem, and give a synthesis procedure based on maximum model counting. Finally, in Section 7 we present experimental results, and conclude  in Section 8.

\section{Related Work}
Providing good-quality environment specifications (typically in the form of assumptions on the allowed behaviours of the environment) is crucial for the synthesis of implementations from high-level specifications. 
Formal specifications, and thus also environment assumptions, are often hard to get right, and have been identified as one of the bottlenecks in formal methods and autonomy~\cite{SpecificationBottleneckRozier16}.
It is therefore not surprising, that there is a plethora of approaches addressing the problem of how to revise inadequate environment assumptions in the cases when these are the cause of unrealizability of the system requirements.

Most approaches in this direction build upon the idea of analyzing the cause of unrealizability of the specification and extracting assumptions that  help eliminate this cause. The method proposed in~\cite{EnvironmentAssumptions} uses the game graph that is used to answer the realizability question in order to construct a B\"uchi automaton representing a minimal assumption that makes the specification realizable. The authors of~\cite{MiningAssumptions} provide an alternative approach where the environment assumptions are gradually strengthened based on counterstrategies for the environment. The key ingredient for this approach is using a library of specification templates and user scenarios for the mining of assumptions, in order to generate good-quality assumptions. A similar approach is used in~\cite{CounterStrategyAssumptionsAlurMT13}, where, however, assumption patterns are synthesized directly from the counterstrategy without the need for the user to provide patterns. A different line of work focuses on  giving feedback to the user or specification designer about the reason for unrealizability, so that they can, if possible, revise the specification accordingly. The key challenge adressed there lies in providing easy-to-understand feedback to users, which relies on finding a minimal cause for why the requirements are not achievable  and generating a natural language explanation of this cause~\cite{ExplainingUnachievableRamanLFLMK13}.

In the above mentioned approaches, assumptions are provided or constructed in the form of a temporal logic formula or an omega-automaton. Thus, it is on the one hand often difficult for specification designers to specify the right assumptions, and on the other hand  special care has to be taken by the assumption generation procedures to ensure that the constructed assumptions are simple enough for the user to understand and evaluate. The work~\cite{BoundedSystemsAndEnvironments} takes a different route, by making assumptions about the \emph{size} of the environment. That is, including as an additional  parameter to the synthesis problem a bound on the state space of the environment. Similarly to temporal logic assumptions, this relaxation of the synthesis problem can render unrealizable specifications into realizable ones. From the system designer point of view, however, it might be significantly easier to estimate the size of environments that are feasible in practice than to express the implications of this additional information in a temporal logic formula. 
In this paper we take a similar route to~\cite{BoundedSystemsAndEnvironments}, and consider a bound on the cyclic structures in the environment's behaviour. Thus, the closest to our work is the temporal synthesis for bounded environments studied in~\cite{BoundedSystemsAndEnvironments}. In fact, we show that the synthesis problem for lasso-precise implementations and the synthesis problem under bounded environments can be reduced to each other. However, while the focus in~\cite{BoundedSystemsAndEnvironments} is on the computational complexity of the bounded synthesis problems, here we provide both automata-theoretic, as well as symbolic approaches for solving the synthesis problem for environments with bounded lassos. We further consider an \emph{approximate version of this synthesis problem}. The benefits of using approximation are two-fold. Firstly, as shown in~\cite{BoundedSystemsAndEnvironments}, while bounding the environment can make some specifications realizable, this comes at a high computational complexity price. In this case, approximation might be able to provide solutions of sufficient quality more efficiently. Furthermore,  even after bounding the environment's input behaviours, the specification might still remain unrealizable, in which case we would like to satisfy the requirements for as many input lassos as possible. In that sense, we get closer to synthesis methods for probabilistic temporal properties in probabilistic environments~\cite{ProbabilisticSystemsKwiatkowskaP13}. However, we consider non-probabilistic environments (i.e., all possible inputs are equally likely), and provide probabilistic guarantees with desired confidence by employing  maximum model counting techniques. Maximum model counting has previously been used for the synthesis of approximate non-reactive programs~\cite{Fremont:EECS-2016-169}. Here, on the other hand we are concerned with the synthesis of reactive systems from temporal specifications.

Bounding the size of the synthesized system implementation is a  complementary restriction of the synthesis problem, which has attracted a lot of attention in recent years~\cite{boundedSynthesis}. The computational complexity of the synthesis problem when both the system's and the environment's size is bounded has been studied in~\cite{BoundedSystemsAndEnvironments}. In this paper we provide a
symbolic synthesis procedure for bounded synthesis of lasso-precise implementations based on quantified Boolean satisfiability.

%%%%%%%%%%%%%%%%%%%%%%%%%%%%%%%%%%%%%%%%%%%%%%%%%%%%%%%%%%%%%%%%%%%%%%%%%%%%%%%%%%%%%%%%%%%%%%%%%%%%%%%%%%%%%%%%%%%%%
%%%%%%%%%%%%%%%%%%% Preliminaries %%%%%%%%%%%%%%%%%%%%%%%%%%%%%%%%%%%%%%%%%%%%%%%%%%%%%%%%%%%%%%%%%%%%%%%%%%%%
%%%%%%%%%%%%%%%%%%%%%%%%%%%%%%%%%%%%%%%%%%%%%%%%%%%%%%%%%%%%%%%%%%%%%%%%%%%%%%%%%%%%%%%%%%%%%%%%%%%%%%%%%%%%%%%%%%%%%
\section{Preliminaries}\label{sec:preliminaries}
We now  recall definitions and notation from formal languages and automata, and notions from reactive synthesis such as implementation and environment.

\paragraph{Linear-time Properties and Lassos.} A \emph{linear-time property} $\varphi$  over an alphabet~$\Sigma$ is a set of infinite words $\varphi \subseteq \Sigma ^\omega$. 
Elements of $\varphi$ are called \emph{models} of $\varphi$.
A \emph{lasso} of length $k$ over an alphabet $\Sigma$ is a pair $(u,v)$ of finite words $u\in \Sigma^{*}$ and $v \in \Sigma^{+}$ with  $|u\cdot v|~=k$ that induces the ultimately periodic word  $u \cdot v^\omega$. We call $u\cdot v$ the \emph{base} of the lasso or ultimately periodic word, and $k$ the \emph{length} of the lasso.

If a word $w\in \Sigma^*$ is a prefix of a word $\sigma \in \Sigma^*\cup\Sigma^\omega$, we write $w< \sigma$. For a language $L \subseteq \Sigma^* \cup \Sigma^\omega$, we define  $\mathit{Prefix}(L) = \{w \in \Sigma^* \mid \exists \sigma \in L: w < \sigma\}$  is the set of all finite words that are prefixes of words in $L$.\looseness=-1

\paragraph{Implementations.} We represent implementations as \textit{labeled transition systems}. Let $I$  and $O$ be finite sets of \emph{input} and \emph{output atomic propositions} respectively. A $2^O$-labeled $2^I$-transition system is a tuple $\mathcal T = (T,t_0,\tau,o)$, consisting of a finite set of states~$T$, an initial state $t_0\in T$, a transition function $\tau\colon T\times 2^I \rightarrow T$, and a labeling function $o\colon T\rightarrow 2^O$. 
We denote by $|\T|$ the size of an implementation $\T$, defined as $|\T| = |T|$.
A \textit{path} in $\mathcal T$ is a sequence~$\pi\colon \mathbb N \rightarrow T \times 2^I$  of states and inputs that follows the transition function, i.e., for all $i \in \mathbb N$  if $ \pi(i) = (t_i,e_i)$ and $\pi(i + 1) = (t_{i+1}, e_{i+1})$, then $t_{i+1} = \tau(t_i, e_i)$. We call a path \emph{initial} if it starts with the initial state: $\pi(0) = (t_0 , e )$ for some $e \in 2^I$.  
For an initial path $\pi$, we call the sequence $\sigma_{\pi}\colon i \mapsto (o(t_i) \cup e_i) \in (2^{I \cup O})^\omega$ the \emph{trace} of $\pi$.  We call the set of traces of a transition system $\mathcal T$ the \emph{language of $\mathcal T$}, denoted $L(\mathcal T)$. 

\emph{Finite-state environments} can be represented as labelled transition systems in a similar way, with the difference that the inputs are the outputs of the implementation, and the states of the environment are labelled with inputs for the implementation. More precisely, a finite-state environment is a $2^I$-labeled $2^O$-transition system $\mathcal E = (E,s_0,\rho,\iota)$. The composition of an implementation $\T$ and an environment $\E$ results in a set of traces of  $\T$, which we denote $L_{\E}(\mathcal T)$,  where
$\sigma=\sigma_0\sigma_1\ldots  \in L_{\E}(\T)$ if and only if $\sigma \in L(\mathcal T)$ and there exists an initial path $s_0s_1\ldots$ in $\E$ such that for all $i \in \nats$, $s_{i+1} = \rho(s_i,\sigma_{i+1}\cap O)$ and $\sigma_i \cap I = \iota(s_i)$.

\paragraph{Linear-time Temporal Logic.}
We specify properties of reactive systems (implementations) as formulas in Linear-time Temporal Logic (LTL) \cite{LTL}. We consider the usual temporal operators Next $\LTLcircle$, Until $\LTLuntil $, and the derived operators Release $\LTLrelease$, which is the dual operator of $\LTLuntil$, Eventually $\LTLdiamond$ and Globally~$\LTLsquare$.
LTL formulas are defined over a set of atomic propositions $\AP$.
We denote the satisfaction of an LTL formula $\varphi$ by an infinite sequence $\sigma \in (2^{AP})^\omega $ of valuations of the atomic propositions  by $\sigma \models \varphi$ and call $\sigma$ a \emph{model} of $\varphi$. For an LTL formula $\varphi$ we define the language $L(\varphi)$ of $\varphi$ to be the set $\{\sigma \in (2^{AP})^\omega \mid \sigma \models \varphi \}$. 

For a set of atomic propositions $\AP= O \cup I$, we say that a $2^O$-labeled $2^I$-transition system $\mathcal T$ satisfies  an LTL formula $\varphi$, if and only if $L(\T) \subseteq L(\varphi)$, i.e., every trace of $\mathcal T$ satisfies $\varphi$. In this case we call $\mathcal T$ a \emph{model} of $\varphi$, denoted $\T \models \varphi$. If $\T$ satisfies $\varphi$ for an environment $\E$, i.e. $L_\E(\T)\subseteq L(\varphi)$, we write $\T \models_{\E}\varphi$.

For $I \subseteq AP$ and $\sigma \in  (2^{AP})^* \cup (2^{AP})^\omega$, we denote with $\sigma|_{I}$ the projection of $\sigma$ on $I$, obtained by the sequence of valuations of the propositions from $I$ in $\sigma$.

\paragraph{Automata Over Infinite Words.} The automata-theoretic approach to reactive synthesis relies on the fact that an LTL specification can be translated to an automaton over infinite words, or, alternatively, that the specification can be provided directly as such an automaton.
An \emph{alternating parity automaton} over an alphabet $\Sigma$ is a tuple $\mathcal{A} =
(Q,q_0,\delta,\mu)$, where
$Q$ denotes a finite set of states, $Q_0 \subseteq Q$ denotes a set of
initial states, $\delta$ denotes a transition function, and $\mu: Q
\rightarrow C \subset \mathbb{N}$ is a coloring function.  The
transition function $\delta: Q \times \Sigma \rightarrow
\mathbb{B}^+(Q)$ maps a state and an input letter to a
positive Boolean combination of states~\cite{Vardi95Alternating}.

A tree $T$ over a set of directions $D$ is a prefix-closed subset of $D^*$. The empty sequence $\epsilon$ is called the root. The children of a node $n\in T$ are the nodes $\{n\cdot d \in T \mid d\in D\}$.  A $\Sigma$-labeled tree is a pair $(T ,l)$, where $l:T \rightarrow \Sigma$ is the labeling function.
A \emph{run} of $\mathcal{A} = (Q,q_0,\delta,\mu)$ on an infinite word $\sigma = \alpha_0\alpha_1\dots \in \Sigma^\omega$ is a $Q$-labeled tree $(T,l)$ that satisfies the following constraints:
(1) $l(\epsilon) = q_0$, and (2) for all $n \in T$, if $l(n)=q$, then $\{ l(n')\mid  n' \text{ is a child of } n\}$ satisfies $\delta(q,\alpha_{|n|})$.

A run tree is \emph{accepting} if every branch either hits a \emph{true} transition or is an infinite branch $n_0n_1n_2 \dots \in T$, and the sequence
$l(n_0)l(n_1)l(n_2)\dots  $ satisfies the
\emph{parity condition}, which requires that the highest color occurring
infinitely often in the sequence $\mu(l(n_0))\mu(l(n_1))\mu(l(n_2))\dots \in \mathbb
N^\omega$ is even.  An infinite word $\sigma$ is accepted by an automaton $\A$ if there exists an accepting run of $\A$ on $\sigma$.
The set of infinite words accepted by  $\mathcal{A}$ is
called its \emph{language}, denoted $L(\mathcal{A})$.

A \emph{nondeterministic} automaton is a special alternating
automaton, where for all states $q$ and input letters $\alpha$, $\delta(q,\alpha)$ is a disjunction.
An alternating automaton is called \emph{universal} if, for all states $q$ and input letters $\alpha$, $\delta(q,\alpha)$ is a conjunction.
A  universal and nondeterministic automaton is called \emph{deterministic}.

A parity automaton is called a \emph{B\"uchi} automaton if and only if the image of
$\mu$ is contained in $\{1,2\}$, a \emph{co-B\"uchi} automaton if and only if
the image of $\alpha$ is contained in $\{0,1\}$.
%, and a \emph{safety} automaton if the image of $\alpha$ is $\{0\}$.  
B\"uchi and co-B\"uchi
automata are denoted by $(Q,Q_0,\delta,F)$, where
$F\subseteq Q$ denotes the states with the higher color. 
%Safetyautomata are denoted by $(Q,Q_0,\delta)$. 
A run graph
of a B\"uchi automaton is thus accepting if, on every infinite path,
there are infinitely many visits to states in $F$; a run graph of a co-B\"uchi
automaton is accepting if, on every path, there are only finitely many visits to  states in $F$.

The next theorem states the relation between LTL and alternating B\"uchi automata, namely that every LTL formula $\varphi$ can be translated to an alternating B\"uchi automaton with the same language and size linear in the length of $\varphi$.
\begin{theorem}{\emph{\cite{alternatingAutomataLTL}}} For every LTL formula $\varphi$ there is an alternating B\"uchi automaton $\A$ of size $O(|\varphi|)$ with $L(\A) = L(\varphi)$, where $|\varphi|$ is the length of $\varphi$.
\end{theorem}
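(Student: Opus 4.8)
The plan is to construct $\A$ directly from the syntax of $\varphi$ by structural induction, taking the states of $\A$ to be the subformulas of $\varphi$. Since a formula of length $n$ has only $O(n)$ distinct subformulas, this immediately gives the desired linear size bound, and the only remaining work is to define the transitions and the coloring so that $L(\A)=L(\varphi)$. As a first step I would put $\varphi$ into negation normal form by pushing negations down to the atomic propositions, using De Morgan together with the dualities $\neg(\psi_1 \LTLuntil \psi_2) \equiv (\neg\psi_1)\,\LTLrelease\,(\neg\psi_2)$ and $\neg\LTLnext\psi \equiv \LTLnext\neg\psi$; this at most doubles the leaves and so preserves the $O(|\varphi|)$ bound.

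I would then let $Q$ be the set of subformulas of the NNF, set $q_0=\varphi$, and define $\delta$ by recursion on each state. On an atom I set $\delta(p,\alpha)=\mathit{true}$ iff $p\in\alpha$ and $\delta(\neg p,\alpha)=\mathit{true}$ iff $p\notin\alpha$; on the Boolean connectives $\delta$ distributes over $\wedge$ and $\vee$; on the next operator $\delta(\LTLnext\psi,\alpha)=\psi$; and on the binary temporal operators I use their one-step fixpoint expansions
\[
\delta(\psi_1 \LTLuntil \psi_2,\alpha) = \delta(\psi_2,\alpha) \vee \bigl(\delta(\psi_1,\alpha) \wedge (\psi_1\LTLuntil\psi_2)\bigr),\qquad \delta(\psi_1 \LTLrelease \psi_2,\alpha) = \delta(\psi_2,\alpha) \wedge \bigl(\delta(\psi_1,\alpha) \vee (\psi_1\LTLrelease\psi_2)\bigr).
\]
The decisive choice is the coloring: I would assign color $1$ to every state of the form $\psi_1\LTLuntil\psi_2$ and color $2$ to all other states, so that $F$ is exactly the set of non-Until subformulas and the Büchi condition demands infinitely many visits to $F$ on every infinite branch.

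It remains to prove the correctness claim $L(\A_q)=L(q)$ for every subformula $q$, which I would establish by induction using the fixpoint clauses above; the Boolean, atomic, and next cases are routine. The hard part will be the two fixpoint operators, where I must show that the Büchi acceptance condition faithfully separates the least-fixpoint reading of Until from the greatest-fixpoint reading of Release. The key observation is that an infinite branch can linger forever in a single state only when that state is an Until or a Release formula, since these are the only subformulas that reoccur in their own transition. For Release, lingering forever corresponds to $\psi_2$ holding at every position, a legitimate model, which is why Release states are accepting; for Until, lingering forever corresponds to postponing the eventuality $\psi_2$ indefinitely, which must be rejected, which is why Until states carry color $1$. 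Making this intuition precise — that the run tree admits an accepting assignment exactly when the obligation generated by each Until is discharged after finitely many steps on every branch — is where the bulk of the argument lies, and it is exactly what the coloring is engineered to enforce.
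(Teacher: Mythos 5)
The paper does not prove this theorem at all --- it is quoted directly from the cited literature (the classical translation of LTL to alternating B\"uchi automata). Your proposal reconstructs exactly that standard construction (states = subformulas in negation normal form, fixpoint expansions for $\LTLuntil$ and $\LTLrelease$, and the B\"uchi condition rejecting branches that get stuck regenerating an Until formula), and it is correct in all essentials, including the key well-foundedness observation that every infinite branch must eventually stabilize in a single self-regenerating Until or Release state.
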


\paragraph{Automata Over Finite Words.} We also use automata over finite words as acceptors for languages consisting of prefixes of traces. A nondeterministic finite automaton over an alphabet $\Sigma$ is a tuple $\A = (Q,Q_0,\delta,F)$, where $Q$ and $Q_0 \subseteq Q$ are again the states and initial states respectively, $\delta : Q \times \Sigma \to 2^Q$ is the transition function and $F$ is the set of accepting states. A run on a word $a_1\ldots a_n$ is a sequence of states $q_0q_1\ldots q_n$, where $q_0 \in Q_0$ and $q_{i+1} \in \delta(q_i,a_i)$. The run is accepting if $q_n \in F$. Deterministic finite automata are defined similarly with the difference that there is a single initial state $q_0$, and that the transition function is of the form $\delta: Q \times \Sigma \to Q$. As usual, we denote the set of words accepted by a nondeterministic or deterministic finite automaton $\A$ by $L(\A)$.
%%%%%%%%%%%%%%%%%%%%%%%%%%%%%%%%%%%%%%%%%%%%%%%%%%%%%%%%%%%%%%%%%%%%%%%%%%%%%%%%%%%%%%%%%%%%%%%%%%%%%%%%%%%%%%%%%%%%%
%%%%%%%%%%%%%%%%%%% Synthesis Lasso-precise %%%%%%%%%%%%%%%%%%%%%%%%%%%%%%%%%%%%%%%%%%%%%%%%%%%%%%%%%%%%%%%%%%%%%%%%%
%%%%%%%%%%%%%%%%%%%%%%%%%%%%%%%%%%%%%%%%%%%%%%%%%%%%%%%%%%%%%%%%%%%%%%%%%%%%%%%%%%%%%%%%%%%%%%%%%%%%%%%%%%%%%%%%%%%%%
\section{Synthesis of Lasso-precise Implementations}
\label{sec:lassoPrecise}
In this section we first define the synthesis problem for environments producing input sequences representable as lassos of length bounded by a given number. We then provide an automata-theoretic algorithm for this synthesis problem.

\subsection{Lasso-precise implementations}
We begin by formally defining the language of  sequences of input values representable by lassos of a given length $k$. For the rest of the section, we consider linear-time properties defined over a set of atomic propositions $\AP$. The subset $I\subseteq\AP$ consists of the input atomic propositions controlled by the environment.

\begin{definition}[Bounded Model Languages]
	Let $\varphi$ be a  linear-time property over a set of atomic propositions $\AP$, let $\Sigma = 2^{\AP}$, and let $I \subseteq \AP$.
	
	 We say that an infinite word $\sigma \in \Sigma^\omega$ is an $I$-$k$-model of $\varphi$, for a bound $k\in \nats$, if and only if there are words $u \in (2^I)^*$ and $v \in (2^I)^+$ such that $|u\cdot v| = k $ and $\sigma|_{I} = u\cdot v^\omega$. 
	The language of $I$-$k$-models of the property $\varphi$ is defined by the set  $L_k^{I}(\varphi)=\{\sigma \in \Sigma^\omega \mid \sigma \text{ is a } I\text{-}k\text{-model of } \varphi \}$. 
\end{definition}

Note that a model of $\varphi$ might be induced by lassos of different length and by more than one lasso of the same length, e.g, $a^\omega$ is induced by $(a, a)$ and $(\epsilon, a a)$. The next lemma establishes that if a model of $\varphi$ can be represented by a lasso of length $k$ then it can also be represented by a lasso of any larger length. 

\begin{lemma}\label{lemma:nfa-prefix}
	For a linear-time property $\varphi$ over $\Sigma=2^{\AP}$, subset $I \subseteq \AP$ of atomic propositions, and bound $k\in \nats$, we have $L_k^I(\varphi)\subseteq L_{k'}^I(\varphi)$ for all $k'> k. $\looseness=-1
\end{lemma}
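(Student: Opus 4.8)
The plan is to reduce the statement to the single increment $L_k^I(\varphi) \subseteq L_{k+1}^I(\varphi)$ and then iterate. The key observation is that membership of $\sigma$ in $L_k^I(\varphi)$ constrains $\sigma$ only through the requirement that its projection $\sigma|_I$ admit a lasso representation $(u,v)$ with base length exactly $k$; any further condition imposed on $\sigma$ by the definition, such as $\sigma$ being a model of $\varphi$, concerns $\sigma$ alone and is independent of the bound $k$. Since I will change only the \emph{representation} of $\sigma|_I$, leaving $\sigma$ (hence $\sigma|_I$ itself) untouched, it suffices to show that whenever $\sigma|_I$ can be written as $u\cdot v^\omega$ with $|u\cdot v| = k$, it can also be written as $u'\cdot v'^\omega$ with $|u'\cdot v'| = k+1$ and $v' \in (2^I)^+$.

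First I would establish this reparametrization by a one-position rotation of the period. Given witnesses $u \in (2^I)^*$, $v \in (2^I)^+$ with $\sigma|_I = u\cdot v^\omega$ and $|u\cdot v| = k$, write $v = a\cdot v''$ where $a \in 2^I$ is the first letter of $v$ and $v'' \in (2^I)^*$. Setting $u' = u\cdot a$ and $v' = v''\cdot a$, a direct unfolding gives $u'\cdot v'^\omega = u\cdot a\cdot(v''\cdot a)^\omega = u\cdot(a\cdot v'')^\omega = u\cdot v^\omega = \sigma|_I$. Moreover $|u'| = |u|+1$ and $|v'| = |v|$, so $|u'\cdot v'| = k+1$, and $v' \in (2^I)^+$ because $|v'| = |v| \ge 1$. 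Hence $\sigma$ is an $I$-$(k+1)$-model of $\varphi$, which yields $L_k^I(\varphi) \subseteq L_{k+1}^I(\varphi)$.

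Finally I would conclude by induction on $k'$ (equivalently, by applying the rotation $k'-k$ times): the base case $k' = k+1$ is the step above, and the inductive step applies the same argument to a representation of base length $k'-1 \ge k$. Transitivity of $\subseteq$ then gives $L_k^I(\varphi) \subseteq L_{k'}^I(\varphi)$ for every $k' > k$.

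The only real subtlety, and the step I would take care to get exactly right, is that the base length must increase by \emph{precisely one}: the naive idea of unrolling one full copy of the period, $u\cdot v^\omega = (u\cdot v)\cdot v^\omega$, raises the base length by $|v|$ rather than by $1$, and so on its own reaches only the lengths $k + j\,|v|$ and not every $k' > k$. The rotation trick avoids this by moving a single letter from the head of the period into the prefix while keeping the period length fixed, so that the increment is always one; verifying the identity $u\cdot(a\cdot v'')^\omega = u\cdot a\cdot(v''\cdot a)^\omega$ and the nonemptiness of $v'$ is then routine.
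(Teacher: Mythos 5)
Your proof is correct and follows essentially the same route as the paper: the paper also moves the first letter of the period into the prefix, writing $u\cdot v_1(v_2\dots v_k v_1)^\omega = u\cdot(v_1\dots v_k)^\omega = \sigma|_I$, and then concludes by induction. Your write-up is in fact slightly more careful, since it makes the one-letter rotation and the pitfall of unrolling a whole period explicit, whereas the paper states the rotation identity directly.
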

\begin{proof}
	Let $\sigma \in L_k^{I}(\varphi)$. Then,  $\sigma \models \varphi$ and there exists $(u,v) \in (2^{I})^*\times (2^{I})^+$ such that $|u\cdot v|=k$ and  $\sigma|_I= u\cdot v^\omega$. Let $v=v_1\dots v_k$. Since $ u\cdot v_1 (v_2\dots v_k v_1)^\omega = u\cdot (v_1\dots v_k)^\omega = \sigma|_{I}$, we have  $\sigma \in L_{k+1}^{I}(\varphi)$. The claim  follows by induction. 
	\qed 
\end{proof}

Using the definition of $I$-$k$-models, the language of infinite sequences of environment inputs representable by lassos of length $k$ can be expressed as $L_k^I(\Sigma^\omega)$.

\begin{definition}[$k$-lasso-precise Implementations]
For a linear-time property $\varphi$ over $\Sigma=2^{\AP}$, subset $I \subseteq \AP$ of atomic propositions, and bound $k\in \nats$, we say that a transition system $\T$ is a $k$-lasso-precise implementation of $\varphi$, denoted $\T \models_{k,I} \varphi$, if it holds that $L_k^I(L(\T)) \subseteq \varphi$.
\end{definition}

That is, in a $k$-lasso-precise implementation $\T$ all the traces of $\T$ that belong to the language $L_k^I(\Sigma^\omega)$ are $I$-$k$-models of the specification $\varphi$. 

\bigskip

{\bf Problem definition: Synthesis of lasso-precise implementations.} 

\noindent
Given a linear-time property $\varphi$ over atomic propositions $\AP$ with input atomic propositions $I$, and given a bound $k \in \nats$, construct an implementation $\T$ such that  $\T \models_{k,I} \varphi$, or determine that such an implementation does not exist.

Another way to bound the behaviour of the environment is to consider a bound on the size of its state space. The \emph{synthesis problem for bounded environments} asks  for a given linear temporal property $\varphi$ and a bound $k \in \nats$ to synthesize a transition system $\T$ such that for every possible environment $\E$ of size at most $k$, the transition system $\T$ satisfies $\varphi$ under environment $\E$, i.e., $T\models_{\E} \varphi$.\looseness=-1

We now establish the relationship between the synthesis of lasso-precise implementations and synthesis under bounded environments. Intuitively,  the two synthesis problems can be reduced to each other since an environment of a given size, interacting with a given implementation,  can only produce ultimately periodic sequences of inputs representable by lassos of length determined by the sizes of the environment and the implementation. This intuition is formalized in the following proposition, stating the connection between the two  problems.

\begin{proposition}
	Given a specification $\varphi$  over a set of atomic propositions $\AP$ with subset $I \subseteq AP$ of atomic propositions controlled by the environment,  and a bound $k \in \nats$,  for every transition system $\T$ the following statements hold:
	\begin{itemize}
	\item[(1)] If $\T \models_{\E} \varphi$ for all environments $\E$ of size at most $k$, then $\T \models_{k,I} \varphi$. 
	\item[(2)] If $\T \models_{k\cdot|\T|,I} \varphi$, then $\T \models_{\E} \varphi$ for all environments $\E$ of size at most $k$.
	\end{itemize}
\end{proposition}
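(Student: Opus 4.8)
The plan is to prove the two directions by translating between the two ways of constraining the environment: a finite-state environment of size at most $k$, and an input word that is a lasso of bounded base length. The pivotal observation is that once the implementation $\T$ is fixed, composing it with a finite-state environment produces a \emph{deterministic}, and hence ultimately periodic, run whose lasso length is controlled by the product of the two state spaces.

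For part (1), I would start from an arbitrary $\sigma \in L_k^I(L(\T))$; by definition $\sigma \in L(\T)$ and $\sigma|_I = u\cdot v^\omega$ for a lasso $(u,v)$ with $|u\cdot v| = k$. The goal is to exhibit a single environment $\E$ of size at most $k$ with $\sigma \in L_{\E}(\T)$, since then $\T \models_{\E}\varphi$ forces $\sigma \in \varphi$, which is exactly the content of $\T \models_{k,I}\varphi$. The environment I would build has one state per position in the base $u\cdot v$, labels each state through $\iota$ with the corresponding input letter, advances deterministically along the $u$-part and then cycles through the $v$-part, and \emph{ignores} the outputs of $\T$ entirely (i.e., $\rho(s,o)$ is independent of $o$). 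This $\E$ has exactly $k$ states and emits the single input sequence $u\cdot v^\omega = \sigma|_I$; because $\sigma$ is already a trace of $\T$ and $\E$ imposes no constraint on outputs, the defining run condition for $L_{\E}(\T)$ is met and $\sigma \in L_{\E}(\T)$, closing this direction.

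For part (2), I would fix an environment $\E$ of size at most $k$ and an arbitrary $\sigma \in L_{\E}(\T)$, and show $\sigma \in L_{k\cdot|\T|}^I(L(\T))$, so that the hypothesis $\T\models_{k\cdot|\T|,I}\varphi$ yields $\sigma\in\varphi$. Here I would consider the composition of $\T$ and $\E$ over the state space $T\times E$, of cardinality $|T|\cdot|E| \le |\T|\cdot k$. The crucial point is that this composition is deterministic: from a product state $(t_i,s_i)$ the input $\iota(s_i)$ and output $o(t_i)$ are fixed, $t_{i+1}=\tau(t_i,\iota(s_i))$, and $s_{i+1}=\rho(s_i,o(t_{i+1}))$, so the entire sequence of product states is uniquely determined. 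By the pigeonhole principle two of the first $|\T|\cdot k + 1$ product states coincide, and by determinism the run, and with it $\sigma|_I$, is ultimately periodic, representable by a lasso of base length at most $|\T|\cdot k$.

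The main obstacle, and the reason the bound is exactly $k\cdot|\T|$, is that the pigeonhole step only bounds the lasso length from above, whereas membership in $L_{k\cdot|\T|}^I(L(\T))$ requires a lasso of \emph{exactly} that length. This gap is bridged by Lemma~\ref{lemma:nfa-prefix}: since $\sigma|_I$ is representable by a lasso of some length $k'\le k\cdot|\T|$, we get $\sigma \in L_{k'}^I(L(\T)) \subseteq L_{k\cdot|\T|}^I(L(\T))$ by monotonicity of $L_{(\cdot)}^I$ in the bound. Combining this with $L_{k\cdot|\T|}^I(L(\T)) \subseteq \varphi$ finishes part (2). The only remaining care is in matching the composition's indexing convention (the environment transition reads the output at the next step) and the degenerate case $u=\epsilon$, neither of which affects the two constructions.
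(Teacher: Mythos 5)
Your proof is correct and follows essentially the same route as the paper's: part (1) rests on building a $k$-state environment that replays the lasso $u\cdot v^\omega$ (the paper asserts this construction inside a proof by contradiction rather than directly), and part (2) rests on the observation that the deterministic product of $\T$ with a $k$-state environment can only generate inputs representable by lassos of length $k\cdot|\T|$. Your write-up is in fact more explicit than the paper's --- you spell out the environment construction, the determinism-plus-pigeonhole argument, and the appeal to Lemma~\ref{lemma:nfa-prefix} to pass from lassos of length \emph{at most} $k\cdot|\T|$ to length \emph{exactly} $k\cdot|\T|$, all of which the paper leaves implicit --- but the underlying ideas coincide.
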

\begin{proof}
For {\it (1)}, let $\T$ be a transition system  such that $\T \models_{\E} \varphi$ for all environments $\E$ of size at most $k$. Assume, for the sake of contradiction, that $\T \not\models_{k,I} \varphi$. Thus, that there exists a word $\sigma \in \outcomes{\mathcal T}$, such that $\sigma\in L_{k}^{I}(\Sigma^\omega)$ and $\sigma \not\models\varphi$.

Since $\sigma\in L_{k}^{I}(\Sigma^\omega)$, we can construct an environment $\E$ of size at most $k$ that produces the sequence of inputs $\sigma|_{I}$.  Since $\E$ is of size at most $k$, we have that $\T \models_{\E} \varphi$. Thus, since $\sigma\in \outcomesE{\T}$, we have $\sigma \models \varphi$, which is a contradiction.

For {\it (2)}, let $\T$ be a transition system such that $\T \models_{k\cdot|\T|,I} \varphi$. Assume, for the sake of contradiction that there exists an environment $\E$ of size at most $k$ such that $\T \not\models_{\E} \varphi$.  Since 
$\T \not\models_{\E} \varphi$, there exists $\sigma\in \outcomesE{\T}$ such that $\sigma \not\models \varphi$. As the number of states of $\E$ is at most $k$, the input sequences it generates can be represented as lassos of size  $k\cdot|\T|$. Thus, $\sigma\in L_{k\cdot|\T|}^{I}(\Sigma^\omega)$. This is a contradiction with the choice of $\T$, according to which  $\T \models_{k\cdot|\T|,I} \varphi$.
\qed
\end{proof}

\subsection{Automata-theoretic synthesis of lasso-precise implementations}

We now provide an automata-theoretic algorithm for the synthesis of lasso-precise implementations. The underlying idea of this approach is to first construct an automaton over finite traces that accepts all finite prefixes of traces in $L_k^I(\Sigma^\omega)$. Then, combining this automaton and an automaton representing the property $\varphi$ we can construct an automaton whose language is non-empty if and only if there exists an $k$-lasso-precise implementation of $\varphi$.

The next theorem presents the construction of a deterministic finite automaton for the language $\mathit{Prefix}(L_k^{I}(\Sigma^\omega))$.
\begin{theorem}\label{thm:dfa-prefix}
	For any set $\AP$ of atomic propositions, subset $I\subseteq \AP$, and bound $k\in \nats$ there is a deterministic finite automaton $\A_k$ over alphabet $\Sigma=2^{\AP}$, with size $(2^{|I|}+1)^k \cdot (k+1)^k$, such that $L(\A_k)= \{w \in \Sigma^* \mid \exists \sigma \in L_k^{I}(\Sigma^\omega). ~w <\sigma\}.$
\end{theorem}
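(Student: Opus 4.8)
The plan is to reduce the problem to recognizing prefixes of lassos of length $k$ over the input alphabet $2^I$, and then to realize $\A_k$ as a product automaton that, while reading, simultaneously records the base of the lasso and tracks every candidate loop position. The first step is to observe that the components of a letter outside $I$ are entirely unconstrained: a word $w \in \Sigma^*$ is a prefix of some $\sigma \in L_k^I(\Sigma^\omega)$ if and only if its projection $w|_I \in (2^I)^*$ is a prefix of some ultimately periodic word $u\cdot v^\omega$ with $|u\cdot v| = k$, since any valuation of the propositions in $\AP \setminus I$ extends such a prefix to an admissible $\sigma$. Hence it suffices to build a deterministic automaton over $2^I$ accepting $\mathit{Prefix}(L_k^I((2^I)^\omega))$ and let $\A_k$ process a letter $\alpha \in \Sigma$ by feeding $\alpha \cap I$ to it, which leaves the number of states unchanged.

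Next I would characterize these prefixes combinatorially. Writing $c_1 c_2 \cdots c_m$ for $w|_I$, I claim it is a prefix of a lasso of length $k$ iff either $m \le k$, or $m > k$ and there is a loop start $p \in \{1,\dots,k\}$, with period length $L = k-p+1$, such that $c_i = c_{i-L}$ for every $i$ with $k < i \le m$. The forward direction is immediate from the shape of $u\cdot v^\omega$; for the converse, any word of length $\le k$ is completed to a base $b_1\cdots b_k$ and closed with the trivial loop $p=k$, while a longer consistent word is precisely the prefix determined by its base $c_1\cdots c_k$ and any surviving $p$.

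This characterization dictates the state space. I would take one component $B \in (2^I \cup \{\bot\})^k$ storing the base read so far, with not-yet-seen positions marked $\bot$, contributing $(2^{|I|}+1)^k$ states, together with a vector of pointers $P \in \{0,1,\dots,k\}^k$, one per candidate loop start, contributing $(k+1)^k$ states; the product yields exactly the stated size. During the first $k$ letters the automaton only fills $B$ — the current position being the number of non-$\bot$ entries — and every such state is accepting, matching that all words of length $\le k$ are prefixes. Upon completing $B$ it initializes $P(p) := p$ for all $p$, and thereafter, on reading $c$, it updates each pointer: a pointer with value $j \ne 0$ survives iff $c = B(j)$, moving to $j+1$, or wrapping to $p$ when $j = k$, and is set to the absorbing value $0$ otherwise; a state is accepting iff some pointer is nonzero.

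Finally I would establish correctness through the invariant that, after reading $c_1\cdots c_m$ with $m \ge k$, one has $P(p) \ne 0$ exactly when $c_i = c_{i-L}$ holds for all $k < i \le m$, i.e. exactly when $p$ remains a consistent loop start; together with the characterization above this identifies $L(\A_k)$ as required, and the count $(2^{|I|}+1)^k \cdot (k+1)^k$ follows directly. The main obstacle I anticipate lies entirely in the bookkeeping of the second phase: because $\A_k$ is deterministic it cannot guess the loop start and must carry all $k$ candidates in parallel, so I must verify that the pointer update — in particular the wrap-around at $j = k$ and the boundary initialization right after position $k$ — faithfully simulates the comparison $c_i = c_{i-L}$ for each $p$. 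Once that invariant is pinned down, both the reduction to $w|_I$ and the state count are routine.
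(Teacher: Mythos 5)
Your proposal is correct and follows essentially the same construction as the paper: project onto $2^I$, store the base of length $k$ (with a placeholder symbol, giving $(2^{|I|}+1)^k$), and deterministically track, for every candidate loop start $p$, a pointer into the base that advances on a match, wraps from $k$ back to $p$, and dies otherwise (giving $(k+1)^k$), with acceptance exactly when the word is short or some pointer survives. The paper's automaton differs only in trivial notation (\(\#\) for \(\bot\), \(-\) for your absorbing value \(0\), and pointers fixed at \((1,\dots,k)\) during the filling phase rather than initialized afterwards), so there is no substantive difference in approach or in the correctness invariant.
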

\paragraph{Idea \& Construction.}
For given $k\in \nats$ we first define an automaton $\widehat\A_k = (Q,q_0,\delta,F)$ over $\widehat\Sigma = 2^I$, such that  $L(\widehat \A_k)= \{\widehat w \in \widehat\Sigma^* \mid \exists \widehat\sigma \in L_k^{I}(\widehat\Sigma^\omega). ~\widehat w <\widehat \sigma\}.$ That, is $L(\widehat \A_k)$ is the set of all finite prefixes of infinite words over $\widehat\Sigma$ that can be represented by a lasso of length $k$. We can then define the automaton $\A_k$ as the automaton that for each $w \in \Sigma^*$ simulates $\widehat{\A}_k$ on the projection $w|_{I}$ of $w$.
We define the automaton $\widehat\A_k = (Q,q_0,\delta,F)$ such that
\begin{itemize}
	\item $ Q = (\widehat\Sigma\cup \{\#\})^k \times \{-,1,\dots,k\}^k$,
	\item $q_0 = (\#^k,(1,2,\dots,k))$,
	\item $\delta(q,\alpha)= \begin{cases}
          						  (w\cdot\alpha\cdot \#^{m-1},t) & \text{ if }q=(w\cdot\#^m,t) 
          						 \text{ where } 1\leq m\leq k, \\
          						  &\;\;\;\; w \in \widehat\Sigma^{(k-m)},~ t \in \{\--,1,\dots,k\}^k\\&\\
          						  (w,(i'_1,\dots,i'_k)) &  \text{ if } q = (w,(i_1,\dots,i_k))  \text{ where } w\in \widehat\Sigma^k, \text{ and} \\
          						  & \;\;i'_j = \begin{cases}
          						  			- & i_j \leq k \wedge w(i_j) \not = \alpha  \text{ or } i_j = -\\
          						  			\\
          						  			i_j+1 & i_j<k \wedge w(i_j)=\alpha\\
          						  			\\
          						  			j & i_j=k \wedge w(i_j) =\alpha
          						  		   \end{cases} 
   						     \end{cases}
   		  $
   	\item $F = Q\setminus \{(w,(-,\dots,-)) \mid w \in \widehat\Sigma^k\}$.
\end{itemize}
\begin{proof}
States of the form $(w\cdot\alpha\cdot \#^{m},t)$ with $m \geq 1$ store the portion of the input word read so far, for input words of length smaller than $k$. In states of this form we have $t=(1,2,\ldots, k)$, which implies that all such states are accepting. In turn, this means that $\A_k$ accepts all words of length smaller or equal to $k$.  This is justified by the fact that, each word of length smaller or equal to $k$ is a prefix of an infinite word  in $L_k^{I}(\widehat\Sigma^\omega)$, obtained by repeating the prefix infinitely often. 
Now, let us consider words of length greater than $k$.

In states of the form $(u,(i_1,\ldots,i_k))$, with $u \in \widehat\Sigma^*$, the word $u$ stores the first $k$ letters of the input word. Intuitively, the tuple $(i_1,\ldots,i_k)$ stores the information about the loops that are still possible, given the portion of the input word that is read thus far. To see this, let us consider a word $w \in \widehat\Sigma^*$ such that $|w|  = l > k$, and let $q_0 q_1 \dots q_l$  be the run of $\A_k$ on $w$. The state $q_l$ is of the form $q_l = (w(1)\ldots w(k), (i_1^l,\ldots,i_k^l))$. It can be shown by induction on $l$ that for each $j$ we have $i_j^l \neq -$ if and only if $w$ is of the form $w=w'\cdot w''\cdot w'''$ where 
$w' = w(1)\ldots w(j-1)$, $w'' = (w(j)\ldots w(k))^k$ for some $k \geq 0$, and $w''' = (w(j)\ldots w(i_j^l-1))$.  Thus, if $i_j^l \neq -$, then it is possible to have a loop starting at position $j$, and  $i_j^l$ is such that $(w(j)\ldots w(i_j^l-1))$ is the prefix of $w(j)\ldots w(k)$  appearing after the (possibly empty) sequence of repetitions of $w(j)\ldots w(k)$. This means, that if $i_j^l \neq -$, then $w$ is a prefix of the infinite word $w' \cdot (w'')^\omega\in L_k^I(\widehat\Sigma^\omega)$. Therefore, if the run of $\A_k$ on a word $w$ with $|w| > k$ is accepting,  then there exists $\sigma \in L_k^I(\widehat\Sigma^\omega)$ such that $w <\sigma$. 

For the other direction, suppose that for each $j$, we have $i_j^l=-$. Take any $j$, and consider the first position $m$ in the run $q_0 q_1\ldots q_l$ where $i_j^m=-$. By the definition of $\delta$ we have that $w(m) \neq w(i_j^{m-1})$. This means that the prefix $w(1)\ldots w(m)$ cannot be extended to the word $w(1)\ldots w(j-1)(w(j)\ldots w(k))^\omega$. Since for every $j\in \{1,\ldots,k\}$ we can find such a position $m$, it holds that  there does not exist $\sigma \in L_k^I(\widehat\Sigma^\omega)$ such that $ w <\sigma$. This concludes the proof. \qed
\end{proof}

The automaton constructed in the previous theorem has size which is exponential in the length of the lassos. In the next theorem we show that this exponential blow-up is unavoidable. That is, we show that every nondeterministic finite automaton for the language $\mathit{Prefix}(L_k^{I}(\Sigma^\omega))$ is of size at least $2^{\Omega(k)}$.
\begin{theorem}
	For any  bound $k\in\nats$ and sets of atomic propositions $\AP$ and $\emptyset \neq I \subseteq\AP$,  every nondeterministic finite automaton $\N$ over the alphabet $\Sigma=2^{\AP}$ that recognizes  $L = \{w \in \Sigma^* \mid \exists \sigma \in L_k^{I}(\Sigma^\omega). ~w <\sigma\}$ is of size at least $2^{\Omega(k)}$.  
\end{theorem}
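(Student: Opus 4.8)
The plan is to establish the lower bound via the standard fooling-set technique for nondeterministic finite automata: if there is a set of pairs $\{(x_s,y_s)\}_{s\in S}$ such that $x_s y_s \in L$ for every $s$, while for all $s \neq t$ at least one of $x_s y_t$, $x_t y_s$ is \emph{not} in $L$, then every NFA recognizing $L$ has at least $|S|$ states (fix accepting runs on $x_s y_s$ and $x_t y_t$; if they reached the same state after reading $x_s$ and $x_t$, splicing would produce an accepting run on a cross word, contradicting the fooling property). So it suffices to exhibit such a fooling set of size $2^{\Omega(k)}$. Since membership of a word $w$ in $L$ depends only on its projection $w|_I$ (the coordinates in $\AP\setminus I$ are completely unconstrained in the defining $\sigma$), and $I \neq \emptyset$, I would first fix two distinct letters $0 := \emptyset$ and $1 := \{p\}\in 2^I$ for some $p \in I$, work with binary words over $\{0,1\}\subseteq 2^I$, and pad the remaining coordinates arbitrarily; this reduces the task to a lower bound for the projected language.

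The fooling set I would use is the diagonal $\{(s,s) \mid s \in U_k\}$, where $U_k \subseteq \{0,1\}^k$ is the set of \emph{unbordered} words of length $k$, i.e.\ words having no nonempty proper prefix that is also a suffix (equivalently, no period $L$ with $1 \le L \le k-1$). First, $s\cdot s \in L$ for every $s$: taking $u = \epsilon$ and $v = s$ gives a lasso of length $k$ with $u\cdot v^\omega = s^\omega$, and $s\cdot s$ is a prefix of $s^\omega$. Second, for $s \neq t$ in $U_k$ I claim $s\cdot t \notin L$. Indeed, if $s\cdot t$ were a prefix of some $u\cdot v^\omega$ with $|u\cdot v| = k$, then, since $|s\cdot t| = 2k \ge k$, the first $k$ letters of $u\cdot v^\omega$ equal $s$, so $u\cdot v = s$ and the suffix block $t$ must coincide with the periodic continuation of $s$ determined by $v$, where $v$ is the suffix of $s$ of length $L := |v| \le k$. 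For $L = k$ this forces $t = s$, contradicting $s \neq t$; for $L < k$ it forces $t$ to be $L$-periodic, hence to have a nonempty border, contradicting $t \in U_k$. Thus both $s\cdot t$ and $t\cdot s$ lie outside $L$, which is stronger than the fooling-set condition demands.

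It then remains to bound $|U_k|$ from below. The key combinatorial ingredient is the classical fact that the number of unbordered binary words of length $k$ is $\Theta(2^k)$ (asymptotically $\mu\cdot 2^k$ with $\mu \approx 0.2678$); in particular $|U_k| = 2^{\Omega(k)}$. Plugging this fooling set of size $|U_k|$ into the fooling-set lemma yields the claimed lower bound $2^{\Omega(k)}$ on the number of states of $\N$.

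I expect the main obstacle to be the argument of the second paragraph: pinning down exactly when a cross word $s\cdot t$ belongs to $L$ — namely, that for words of length $2k$ membership is equivalent to $t$ being the periodic continuation of $s$ for some period $L \le k$ — and then choosing the fooling family so that this can never occur off the diagonal. Restricting to unbordered words makes the continuation step immediate and reduces the remaining work to the routine (and citable) count of unbordered words; if a fully self-contained estimate is preferred, one could instead isolate an explicit exponential sub-family of unbordered words and count it directly.
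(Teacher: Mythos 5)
Your proof is correct, and it shares the paper's overall skeleton: both are cut-and-paste (fooling-set) arguments over diagonal words $w\cdot w$, which lie in $L$ because $w\cdot w < w^\omega$, with the exponential bound coming from exhibiting exponentially many such words whose cross words $w_1\cdot w_2$ fall outside $L$. The difference is the key combinatorial ingredient used to kill the cross words. The paper uses an endmarker family: it fixes a letter $a\in 2^I$ and takes words of length $k$ whose $I$-projection contains $a$ exactly at the last position; if $w_1\cdot w_2\in L$, then $a$ must recur with the loop's period, yet it may occur only at positions $k$ and $2k$, forcing the loop to be all of $w_1|_I$ and hence $w_2|_I=w_1|_I$. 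You instead use unbordered words: membership of $s\cdot t$ in $L$ forces $t$ to be $|v|$-periodic, hence bordered when $|v|<k$, and forces $t=s$ when $|v|=k$. Both mechanisms validly exclude every period shorter than $k$; yours trades the paper's trivial counting of its family for an appeal to the classical fact that a constant fraction of binary words are unbordered (which, as you suspect, can also be made self-contained in two lines: every bordered word of length $k$ has a shortest border of length $b\le\lfloor k/2\rfloor$ and is determined by its first $k-b$ letters, so the unbordered count is at least $2^k-\sum_{b\le k/2}2^{k-b}\ge 2^{\lceil k/2\rceil}$). Your family also has a genuine advantage: it needs only two letters in $2^I$, so it covers $|I|=1$. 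The paper's pigeonhole, as written, needs exponentially many distinct $I$-projections within its family, i.e.\ $|2^I\setminus\{a\}|\ge 2$, so it only establishes the claim for $|I|\ge 2$; your argument closes that corner case.
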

\begin{proof}
Let $\N = (Q,Q_0,\delta,F)$ be a nondeterministic finite automaton for $L$. For each $w \in \Sigma^k$, we have that $w\cdot w \in L$. Therefore, for each $w \in \Sigma^k$ there exists at least one accepting run $\rho=q_0 q_1 \ldots q_f$ of $\N$ on $w\cdot w$.  We denote with $q(\rho,m)$ the state $q_m$ that appears at the position indexed $m$ of a run $\rho$. 

Let $a \in 2^I$ be a letter in $2^I$, and let $\Sigma' = \Sigma \setminus \{ a' \in \Sigma \mid a'|_{I} = a\}$. Let $L' \subseteq L$ be the language $L' = \{w \in \Sigma^k \mid  \exists w' \in (\Sigma')^{k-1},a'\in \Sigma:~w = w'\cdot a' \text{ and }a'|_{I}=a\}$. That is, $L'$ consists of the words of length $k$ in which letters $a'$ with $a'|_I=a$ appear in the last position and only in the last position. 

Let us define the set of states
\[Q_k  =\{q(\rho,k) \mid  \exists w \in L':\; \rho \text{ is an accepting run of }\N \text{ on } w\cdot w\}.\]

That is, $Q_k$ consists of the states that appear at position $k$ on some accepting run on some word $w \cdot w $, where $w$ is from $L'$. We will show that  $|Q_k| \geq 2^{k-1}$. 

Assume that this does not hold, i.e., $|Q_k| < 2^{k-1}$. Since $|L'| \geq  2^{k-1}$, this implies that there exist  $w_1,w_2 \in L'$, such that $w_1|_I \neq w_2|_I$ and there exists accepting runs $\rho_1$ and $\rho_2$ of $\N$ on $w_1\cdot w_1$ and $w_2\cdot w_2$ respectively, such that $q(\rho_1,k)  = q(\rho_2,k)$. That is, there must be two words in $L'$ with $w_1|_I \neq w_2|_I$, which have accepting runs on $w_1\cdot w_1$ and $w_2 \cdot w_2$ visiting the same state at position $k$.\looseness=-1

We now construct a run $\rho_{1,2}$ on the word $w_1 \cdot w_2$ that follows $\rho_1$ for the first $k$ steps on $w_1$, ending in state $q(\rho_1,k)$, and from there on follows $\rho_2$ on $w_2$. It is easy to see that $\rho_{1,2}$ is a run on the word $w_1 \cdot w_2$. The run is accepting, since $\rho_2$ is accepting.  This means that $w_1 \cdot w_2 \in L$, which we will show leads to contradiction.

To see this, recall that $w_1 = w_1'\cdot a'$ and $w_2 = w_2' \cdot a''$, and $w_1|_I \neq w_2|_I$, and $a'|_I=a''|_I=a$. Since $w_1\cdot w_2 \in L$, we have that $w_1'\cdot a' \cdot w_2' \cdot a'' <  \sigma$ for some $\sigma \in L_k^{I}(\Sigma^\omega)$. That is, there exists a lasso for some word $\sigma$, and $w_1'\cdot a' \cdot w_2' \cdot a''$ is a prefix of this word. Since $a$ does not appear in $w_2'|_{I}$, this means that the loop in this lasso is the whole word $w_1|_I$, which is not possible, since $w_1|_I \neq w_2|_I$.

 This is a contradiction,  which shows that  $|Q| \geq  |Q_k| \geq 2^{k-1}$. Since $\mathcal N$ was an arbitrary nondeterministic finite automaton for $L$, this implies that the minimal automaton for $L$ has at least $2^{\Omega(k)}$ states, which concludes the proof.\qed	
\end{proof}

Using the automaton from Theorem~\ref{thm:dfa-prefix}, we can transform every property automaton $\A$ into an automaton that accepts words representable by lassos of length less than or equal to $k$ if and only if they are in $L(\A)$, and accepts all words that are not representable by lassos of length less than or equal to $k$.
\begin{theorem}\label{thm:automaton-synth}
	Let $\AP$ be a set of atomic propositions, and let $I \subseteq \AP$. For every (deterministic, nondeterministic or alternating) parity automaton $\A$ over  $\Sigma=2^{\AP}$, and $k \in \nats$, there is a (deterministic, nondeterministic or alternating) parity automaton $\A'$ of size $2^{O(k)} \cdot |\A|$, s.t., $L(\A') = (L_k^I(\Sigma^\omega)\cap L(\A)) \cup (\Sigma^\omega\setminus L_k^I(\Sigma^\omega))$.
\end{theorem}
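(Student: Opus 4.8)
The plan is to first rewrite the target language as a union and then reduce to a standard union construction applied to $\A$ and a small deterministic automaton for the complement of the lasso language. By distributivity,
$(L_k^I(\Sigma^\omega)\cap L(\A)) \cup (\Sigma^\omega\setminus L_k^I(\Sigma^\omega)) = L(\A) \cup (\Sigma^\omega\setminus L_k^I(\Sigma^\omega))$,
so it suffices to build a parity automaton of the same type as $\A$ and of size $2^{O(k)}\cdot|\A|$ for $L(\A) \cup (\Sigma^\omega\setminus L_k^I(\Sigma^\omega))$.

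The key step is to reinterpret the deterministic finite automaton $\A_k$ of Theorem~\ref{thm:dfa-prefix} as an $\omega$-automaton for $L_k^I(\Sigma^\omega)$. Inspecting the transition function of $\widehat{\A}_k$ from that proof, I would first note that the pointers are monotone: once a coordinate of $(i_1,\dots,i_k)$ becomes $-$ it stays $-$, so the states $(w,(-,\dots,-))$, which are exactly $Q\setminus F$, form an absorbing sink region. Combined with the invariant established there (that $i_j^l\neq -$ iff the prefix read so far is a prefix of $w(1)\dots w(j-1)(w(j)\dots w(k))^\omega$), this shows that an infinite word is in $L_k^I((2^I)^\omega)$ iff some pointer survives forever, which by monotonicity is equivalent to the run of $\widehat{\A}_k$ never entering the sink region. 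Lifting from $2^I$ to $\Sigma=2^{\AP}$ exactly as $\A_k$ is obtained from $\widehat{\A}_k$ (by simulating on the projection $\sigma|_I$), I obtain a deterministic automaton $\mathcal{B}_k$, of the size given by Theorem~\ref{thm:dfa-prefix} and hence $2^{O(k)}$, whose sink region is absorbing and which, read with the reachability acceptance ``reach the sink'', recognizes $\Sigma^\omega\setminus L_k^I(\Sigma^\omega)$; this is a two-color parity condition.

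It then remains to take the union $L(\A)\cup L(\mathcal{B}_k)$ while preserving the type of $\A$. For nondeterministic and alternating $\A$ this is routine: a fresh initial state whose transition is the disjunction of the initial transitions of $\A$ and $\mathcal{B}_k$ gives a nondeterministic, respectively alternating, parity automaton; since the two components run on disjoint state sets their colorings are simply retained, and the total size is $|\A|+2^{O(k)}+O(1)$. The deterministic case is where I expect the real obstacle, since a disjunctive initial state destroys determinism, so a product is needed and two acceptance conditions must be merged into a single parity condition. Here I would exploit that the sink region of $\mathcal{B}_k$ is absorbing: in the (deterministic and complete, hence non-blocking) product $\A\times\mathcal{B}_k$ I assign to every state whose $\mathcal{B}_k$-component lies in the sink a single even color strictly larger than all colors of $\A$, and to every other product state the color of its $\A$-component. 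If $\sigma\notin L_k^I(\Sigma^\omega)$ the run eventually remains in the sink, so the dominating even color occurs infinitely often and $\sigma$ is accepted irrespective of $\A$; if $\sigma\in L_k^I(\Sigma^\omega)$ the sink is never entered and acceptance coincides with that of $\A$. This yields a deterministic parity automaton with at most two colors more than $\A$ and size $|\A|\cdot 2^{O(k)}$, so the bound $2^{O(k)}\cdot|\A|$ holds uniformly. What remains is purely verificational: checking the monotonicity and absorption claims against the explicit transition function of Theorem~\ref{thm:dfa-prefix}, and confirming the coloring argument in each of the three cases.
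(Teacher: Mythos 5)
Your proof is correct, and its technical core coincides with the paper's: both exploit that the non-accepting states of the automaton of Theorem~\ref{thm:dfa-prefix} form an absorbing sink, and both make every word outside $L_k^I(\Sigma^\omega)$ accepted by giving sink states an even color (the paper uses color $0$, you use a dominating even color; the two are interchangeable precisely because the sink is absorbing, so after absorption no other color occurs). The difference is organizational. The paper runs one uniform construction for all three automaton types: the product of $\A$ with the DFA $\D$ of Theorem~\ref{thm:dfa-prefix}, recolored as above; since $\D$ is deterministic, this product preserves determinism, nondeterminism, and alternation alike, and yields $2^{O(k)}\cdot|\A|$ in every case. You instead first simplify the target language via $(L_k^I(\Sigma^\omega)\cap L(\A))\cup(\Sigma^\omega\setminus L_k^I(\Sigma^\omega)) = L(\A)\cup(\Sigma^\omega\setminus L_k^I(\Sigma^\omega))$, reinterpret the DFA as a two-color deterministic (co-safety) parity automaton $\mathcal{B}_k$ for $\Sigma^\omega\setminus L_k^I(\Sigma^\omega)$, and then split cases: disjoint union for nondeterministic and alternating $\A$, and a product --- which is exactly the paper's construction --- for deterministic $\A$. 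What your route buys is an additive bound $|\A|+2^{O(k)}$ in the nondeterministic and alternating cases, versus the paper's multiplicative $2^{O(k)}\cdot|\A|$; the cost is the case distinction plus an extra correctness obligation the paper never needs, namely that $\sigma\in L_k^I(\Sigma^\omega)$ iff the run of $\mathcal{B}_k$ on $\sigma$ never enters the sink. Your appeal to pointer monotonicity is the right reason for this, but you should make the pigeonhole step explicit: since the $k$ pointers die monotonically, avoiding the all-dead sink state at every finite time implies that some single pointer survives forever, and only then does the paper's invariant yield the fixed ultimately periodic word witnessing $\sigma\in L_k^I(\Sigma^\omega)$.
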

\begin{proof}
The theorem is a consequence of 	Theorem~\ref{thm:dfa-prefix} established as follows. Let $\A = (Q,Q_0,\delta,\mu)$ be a parity automaton, and let $\D = (\widehat Q,\widehat q_0,\widehat \delta,F)$ be the deterministic finite automaton for bound $k$ defined as in Theorem~\ref{thm:dfa-prefix}. We define the parity automaton 
 $\A = (Q',Q_0',\delta',\mu')$ with the following components:
 \begin{itemize}
	\item $ Q' = (Q \times \widehat Q)$;
	\item $Q_0' = \{(q_0,\widehat q_0) \mid q_0 \in Q_0\}$ (when $\A$ is deterministic $Q_0'$ is a singleton set);
	\item $\delta'((q,\widehat q),\alpha)= \delta(q,\alpha)_{[q'/(q',\widehat{\delta}(\widehat{q},\alpha))]},
   		  $ where $\delta(q,\alpha)_{[q'/(q',\widehat q')]}$ is the Boolean expression obtained from $\delta(q,\alpha)$ by replacing every state $q'$ by the state $(q',\widehat q')$;
   	\item $\mu'((q,\widehat q)) =
   	 \begin{cases}
   	 \mu(q) & \text{ if } \widehat{q} \in F,\\
   	 0 & \text{ if } \widehat{q} \not\in F.\\
   	 \end{cases}$
\end{itemize}
Intuitively, the automaton $\A'$ is constructed as the product of $\A$ and $\D$, where runs entering a state in $\D$ that is not accepting in $\D$ are accepting in $\A'$. To see this, recall from the construction in Theorem~\ref{thm:dfa-prefix}  that once $\D$ enters a state in $\widehat{Q}\setminus\widehat{F}$ it remains in such a state forever. Thus, by setting the color of all states $(q,\widehat q)$ where $\widehat{q} \not\in F$ to $0$, we ensure that words containing a prefix rejected by $\D$ have only runs in which the highest color appearing infinitely often is $0$. Thus, we ensure that all  words that are not representable by lassos of length less than or equal to $k$ are accepted by $\A'$, while words representable by lassos of length less than or equal to $k$ are accepted if and only if they are in $L(\A)$.\qed
\end{proof}

The following theorem is a consequence of the one above, and provides us with an automata-theoretic approach to solving the lasso-precise synthesis problem.

\begin{theorem}[Synthesis] Let $\AP$ be a set of atomic propositions, and  $I \subseteq \AP$ be a subset of $\AP$ consisting of the atomic propositions controlled by the environment. For a specification, given as a deterministic parity automaton $\mathcal P$ over the alphabet $\Sigma  = 2^{\AP}$, and a bound $k \in \nats$, finding an implementation $\T$, such that, $\T \models_{k,I} \mathcal P$ can be done in time polynomial in the size of the automaton $\mathcal P$ and exponential in the bound $k$.  
\end{theorem}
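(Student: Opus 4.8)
The plan is to reduce the construction of a $k$-lasso-precise implementation to the solution of a single parity game obtained from the automaton of Theorem~\ref{thm:automaton-synth}, and then to read off the complexity from standard parity-game bounds. First I would rephrase the correctness requirement as a language-containment condition. By definition $\T \models_{k,I} \mathcal P$ means $L_k^I(L(\T)) \subseteq L(\mathcal P)$, i.e.\ every trace $\sigma \in L(\T)$ whose input projection $\sigma|_I$ is representable by a lasso of length $k$ must lie in $L(\mathcal P)$. Writing $A = L_k^I(\Sigma^\omega)$ and $B = L(\mathcal P)$, this is exactly $L(\T) \subseteq (\Sigma^\omega \setminus A) \cup B$. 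A one-line Boolean manipulation gives $(\Sigma^\omega \setminus A) \cup B = (A \cap B) \cup (\Sigma^\omega \setminus A)$, which is precisely the language $L(\A')$ produced by Theorem~\ref{thm:automaton-synth} applied to $\A = \mathcal P$. Hence $\T \models_{k,I} \mathcal P$ \emph{if and only if} $L(\T) \subseteq L(\A')$, and since $\mathcal P$ is deterministic, $\A'$ is again a deterministic parity automaton, of size $2^{O(k)} \cdot |\mathcal P|$ and with the same colors as $\mathcal P$ up to the extra color $0$.

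Next I would invoke the textbook game-theoretic solution of synthesis from a deterministic parity automaton. From $\A'$ I build a parity game in which each round proceeds by the system first choosing an output valuation in $2^O$ and the environment then choosing an input valuation in $2^I$; the automaton $\A'$ reads the combined letter and moves deterministically, and the parity condition of $\A'$ is the system's winning condition. The arena has size polynomial in $|\A'|$, the only extra factor being the branching $2^{|\AP|}$, which is the alphabet size. A winning strategy for the system exists if and only if there is a transition system with language contained in $L(\A')$; as parity games admit positional strategies, such a strategy yields directly a finite-state implementation $\T$ (its states are the reachable $\A'$-states, its labeling is the chosen output, and its transitions follow $\A'$), while a win for the environment certifies that no $k$-lasso-precise implementation exists.

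For the complexity, the arena has size $2^{O(k)} \cdot \mathrm{poly}(|\mathcal P|)$ and a number of priorities bounded by the colors of $\mathcal P$; for a fixed number of colors, solving a parity game and extracting a positional strategy is polynomial in the arena size, so the whole procedure runs in time polynomial in $|\mathcal P|$ and exponential in $k$, as claimed. I expect the main obstacle to be the complexity bookkeeping rather than any new construction: the reduction to language containment is immediate from Theorem~\ref{thm:automaton-synth}, so the only points needing care are confirming that the $2^{O(k)}$ blow-up enters only multiplicatively—so it remains exponential in $k$ and polynomial in $|\mathcal P|$—and that the color count, and hence the exponent in the parity-game algorithm, is inherited essentially unchanged from $\mathcal P$.
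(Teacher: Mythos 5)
Your proposal is correct and follows essentially the same route the paper intends: the paper states this theorem as a direct consequence of Theorem~\ref{thm:automaton-synth} and leaves the details implicit, which you fill in with the standard argument (rephrasing $\T \models_{k,I} \mathcal P$ as $L(\T) \subseteq L(\A')$, then solving the induced parity game with positional strategies to extract a finite-state implementation, with the $2^{O(k)}$ blow-up entering only multiplicatively). The one caveat you rightly flag---that polynomial-time parity-game solving presumes the number of colors is treated as fixed---is a subtlety glossed over in the paper's own statement, so your bookkeeping is, if anything, more careful than the original.
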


%%%%%%%%%%%%%%%%%%%%%%%%%%%%%%%%%%%%%%%%%%%%%%%%%%%%%%%%%%%%%%%%%%%%%%%%%%%%%%%%%%%%%%%%%%%%%%%%%%%%%%%%%%%%%%%%%%%%%
%%%%%%%%%%%%%%%%%%% Bounded Synthesis Lasso-precise %%%%%%%%%%%%%%%%%%%%%%%%%%%%%%%%%%%%%%%%%%%%%%%%%%%%%%%%%%%%%%%%%
%%%%%%%%%%%%%%%%%%%%%%%%%%%%%%%%%%%%%%%%%%%%%%%%%%%%%%%%%%%%%%%%%%%%%%%%%%%%%%%%%%%%%%%%%%%%%%%%%%%%%%%%%%%%%%%%%%%%%
\section{Bounded Synthesis of Lasso-precise Implementations}
\label{sec:boundedLassoPrecise}
For a specification $\varphi$ given as an LTL formula, a bound $n$ on the size of the synthesized implementation and a bound $k$ on the lassos of input sequences, \emph{bounded synthesis of lasso-precise implementations}
searches for an implementation $\T$ of size $n$, such that $\T \models_{k,I} \varphi$. Using the automata constructions in the previous section we can construct a universal co-B\"uchi automaton for the language $L_k^I(\varphi) \cup (\Sigma^\omega\setminus L_k^I(\Sigma^\omega))$ and construct the constraint system as presented in~\cite{boundedSynthesis}. This constraint system is exponential in both $|\varphi|$ and $k$. In the following we show how the problem can be encoded as a quantified Boolean formula of size polynomial in $|\varphi|$ and $k$. 

\begin{theorem}
	For a specification given as an LTL formula $\varphi$, and bounds $k \in \nats$ and $n\in\nats$, there exists a quantified Boolean formula $\phi$, such that, $\phi$ is satisfiable if and only if there is a transition system $\T =(T,t_0,\tau,o)$ of size $n$ with $\T \models_{k,I} \varphi$. The size of $\phi$ is in  $O(|\varphi| + n^2 + k^2)$. The number of variables of $\phi$ is equal to $n\cdot(n\cdot 2^{|I|} +|O|) + k\cdot(|I|+1) +n\cdot k(|O|+n+1)$.
\end{theorem}

\paragraph{Construction.} We encode the bounded synthesis problem in the following quantified Boolean formula: 
\begin{flalign}
	& \exists \{\tau_{t,i,t'} \mid t,t' \in T, i \in 2^I\}. ~\exists \{o_t \mid t\in T, o \in O\}.&\\
	& \forall \{i_j \mid i \in I, 0 \leq j < k\}. ~\forall \{l_j \mid 0 \leq j < k\}. &\\
	& \forall \{o_j \mid o\in O, 0\leq j < n\cdot k\}. &\\
	& \forall \{t_j \mid t \in T, 0 \leq j < n\cdot k\}. &\\
	& \forall \{l'_j \mid 0 \leq j <n\cdot k\}. &\\
	& \varphi_{\text{det}} \wedge (\varphi_{\text{lasso}} \wedge \varphi_{\in \T}^{n,k} \rightarrow \llbracket \varphi \rrbracket_0^{k, n\cdot k})& 
\end{flalign}
which we read as: there is a transition system (1), such that, for all input sequences representable by lassos of length $k$ (2) the corresponding sequence of outputs of the system (3) satisfies $\varphi$. 
The variables introduced in lines (4) and (5) are necessary to encode the  corresponding output for the chosen input lasso. 

An assignment to the variables satisfies the formula in line (6), if it represents a deterministic transition system ($\varphi_{\text{det}}$) in which lassos of length $n\cdot k$ ($\varphi_{\text{lasso} \wedge \varphi_{\in \T}^{n,k}}$)  satisfy the property $\varphi$ ($\llbracket \varphi \rrbracket_0^{(k, n\cdot k)})$). These constraints are defined as follows. 

$\varphi_{\text{det}}$: A transition system is deterministic if for each state $t$ and input $i$ there is exactly one transition $\tau_{t,i,t'}$ to some state $t'$:
	$\bigwedge \limits_{t \in T} ~\bigwedge \limits_{i \in 2^I} ~\bigvee \limits_{t' \in T} (\tau_{t,i,t'} \wedge \bigwedge \limits_{t' \not =t''} \overline{\tau_{t,i,t''}})$.

$\varphi_{\in \T}^{n,k}$: for a certain input lasso of size $k$ we can match a lasso in the system of size at most $n\cdot k$. A lasso of this size in the transition system matches the input lasso if the following constraints are satisfied.   
\begin{align}
		& \bigwedge \limits_{0 \leq j < n\cdot k} ~ \bigwedge \limits_{t \in T}  (t_j \rightarrow \bigwedge \limits_{o\in O}(o_j \leftrightarrow o_{t_j}))  \\
		\wedge & ~t_{00}
		\\
		\wedge & \bigwedge \limits_{0 \leq j <n\cdot k-1} ~\bigwedge \limits_{i \in 2^I} ~\bigwedge \limits_{t,t' \in T} ((\bigwedge \limits_{0\leq j'<k}l_{j'} \rightarrow i_{\Delta(j,k,j')}) \wedge t_j \rightarrow (\tau_{t,i,t'} \leftrightarrow t'_{j+1}) )\\
		\wedge &  \bigwedge \limits_{i \in 2^I,t,t' \in T} ((\bigwedge \limits_{0\leq j'<k} l_{j'}\rightarrow i_{\Delta(n\cdot k-1,k,j')}) \wedge t_{n\cdot k-1} \rightarrow (\tau_{t,i,t'} \leftrightarrow (\bigvee \limits_{0\leq j < n\cdot k} l'_j \wedge t'_{j}) ))
	\end{align}
	Lines (9) and (10) make sure that the chosen lasso follows the guessed transition relation $\tau$. Line (10) handles the loop transition of the lasso, and makes sure that the loop of the lasso follows $\tau$. Line (7) is a necessary requirement in order to match the output produced on the lasso with $\varphi$. If the output variables $o_j$ satisfy the constraint $\llbracket \varphi \rrbracket_0^{(k,n\cdot k)}$, then the lasso satisfies $\varphi$. As the input lasso is smaller than its matching lasso in the system we need to make sure that the indices of the input variables are correct with respect to the chosen loop. This is computed using the function $\Delta$ which is given by: \\
	 	
	 	$\Delta(j,k,j')= \begin{cases}
						    j & \text{ if }j<k,\\
						    ((j-k) \mod (k -j') )+j'& \text{otherwise.}
						   \end{cases}$\\
	
$\varphi_\text{lasso}$: The formula  encodes the additional constraint that exactly one of the loop variables can be true for a given variable valuation.  

$\llbracket \varphi \rrbracket^{k,m}_0$: This constraint encodes the satisfaction  of $\varphi$ on lassos of size $m$. The encoding is similar to the encoding of bounded model checking \cite{bmc}, with the distinction of encoding the satisfaction relation of the atomic propositions, given below. As the inputs run with different indices than the outputs, we again, as in the lines (9) and (10), need to compute the correct indices using the function $\Delta$.
\looseness=-1
 
\begin{center}
			\small
			\begin{tabular}{|c||c|c|}
			\hline
			& $h<m$ & $h=m$ \\
			\hline
			$\llbracket i \rrbracket ^{k,m}_h$ & $ \bigwedge \limits_{0\leq j' <k}(l_{j'} \rightarrow i_{\Delta(h,k,j')})$ & $\bigvee_{j=0}^{m-1} (l'_j \wedge \bigwedge \limits_{0\leq j' <k}(l_{j'} \rightarrow i_{\Delta(j,k,j')}))$ \\
			\hline 
			$\llbracket \neg i \rrbracket ^{k,m}_h$ & $\bigwedge \limits_{0\leq j' <k}(l_{j'} \rightarrow  \neg i_{\Delta(h,k,j')})$& $\bigvee_{j=0}^{m-1} (l'_j \wedge \bigwedge \limits_{0\leq j' <k}(l_{j'} \rightarrow \neg i_{\Delta(j,k,j')}))$ \\
			\hline
			$\llbracket o \rrbracket ^{k,m}_h$ & $o_h$ & $\bigvee_{j=0}^{m-1} (l'_j \wedge o_j)$ \\
			\hline 
			$\llbracket \neg o \rrbracket ^{k,m}_h$ & $\neg o_h$& $\bigvee_{j=0}^{m-1} (l'_j \wedge \neg o_j)$ \\
%			\hline
%			$\llbracket \LTLcircle \varphi_1 \rrbracket ^{k,m}_h $ & $\llbracket \varphi_1 \rrbracket ^{k,m}_{h+1} $ &  $\bigvee_{j=0}^{m-1} (l'_j \wedge \llbracket \varphi_1 \rrbracket^{k,m}_j)$\\
%			\hline
%			$\llbracket \varphi_1 \LTLuntil \varphi_2 \rrbracket ^{k,m}_h$ & $\llbracket \varphi_2 \rrbracket ^{k,m}_h \vee (\llbracket \varphi_1 \rrbracket ^{k,m}_h \wedge \llbracket \varphi_1 \LTLuntil \varphi_2 \rrbracket ^{k,m}_{h+1})$ & $\bigvee_{j=0}^{m-1} (l'_j \wedge \langle \varphi_1 \LTLuntil \varphi_2 \rangle ^{k,m}_j )$ \\
%			\hline
%			$\langle \varphi_1 \LTLuntil \varphi_2 \rangle ^{k,m}_h$ & $\llbracket \varphi_2 \rrbracket ^{k,m}_h \vee (\llbracket \varphi_1 \rrbracket^{k,m}_h \wedge \langle \varphi_1 \LTLuntil \varphi_2 \rangle ^{k,m}_{h+1})$ & false \\
%			\hline
%			$\llbracket \varphi_1 \LTLrelease \varphi_2 \rrbracket ^{k,m}_h$ & $\llbracket \varphi_2 \rrbracket ^{k,m}_h \wedge (\llbracket \varphi_1 \rrbracket ^{k,m}_h \vee \llbracket \varphi_1 \LTLrelease \varphi_2 \rrbracket ^{k,m}_{h+1})$ & $\bigvee_{j=0}^{m-1} (l'_j \wedge \langle \varphi_1 \LTLrelease \varphi_2 \rangle ^{k,m}_j )$ \\
%			\hline
%			$\langle \varphi_1 \LTLrelease \varphi_2 \rangle ^{k,m}_h$ & $\llbracket \varphi_2 \rrbracket ^{k,m}_h \wedge (\llbracket \varphi_1 \rrbracket^{k,m}_h \vee \langle \varphi_1 \LTLrelease \varphi_2 \rangle ^{k,m}_{h+1})$ & true \\
			\hline
		\end{tabular}
\end{center}

%%%%%%%%%%%%%%%%%%%%%%%%%%%%%%%%%%%%%%%%%%%%%%%%%%%%%%%%%%%%%%%%%%%%%%%%%%%%%%%%%%%%%%%%%%%%%%%%%%%%%%%%%%%%%%%%%%%%%
%%%%%%%%%%%%%%%%%%% Lasso-precise Approx %%%%%%%%%%%%%%%%%%%%%%%%%%%%%%%%%%%%%%%%%%%%%%%%%%%%%%%%%%%%%%%%%%%%%%%%%%%%
%%%%%%%%%%%%%%%%%%%%%%%%%%%%%%%%%%%%%%%%%%%%%%%%%%%%%%%%%%%%%%%%%%%%%%%%%%%%%%%%%%%%%%%%%%%%%%%%%%%%%%%%%%%%%%%%%%%%%
\section{Synthesis of Approximate Implementations}
\label{sec:approx}

In some cases, specifications remain unrealizable even when considered under bounded environments. Nevertheless, one might still be able to construct implementations that satisfy the specification in almost all input sequences of the environment. Consider for example the following  simplified arbiter specification: 
$$\LTLsquare(\overline{w} \rightarrow \LTLcircle \overline{g}) \wedge \LTLsquare (r 	\rightarrow \LTLdiamond g) $$
The specification defines an arbiter that should give grants $g$ upon requests $r$, but is not allowed to provide these grants unless a signal $w$ is true. The specification is unrealizable, because a sequence of inputs where the signal $w$ is always false prevents the arbiter from answering any request. Bounding the environment does not help in this case as a lasso of size 1 already suffices to violate the specification (the one where $w$ is always false). Nevertheless, one can still find reasonable implementations that satisfy the specification for a large fraction of input sequences. In particular, the fraction of input sequences where $w$ remains false forever is less probable.

\begin{definition}[$\epsilon$-$k$-Approximation]
	For a specification $\varphi$, a bound $k$, and an error rate $\epsilon$, we say that a transition system $\T$ approximately satisfies $\varphi$ with an error rate $\epsilon$ for lassos of length at most $k$, denoted by $\T \models_{k,I}^\epsilon \varphi$, if and only if, $\frac{|\{\sigma \mid \sigma \in L^I_k(L(\T)), \sigma \models \varphi\}|}{|L_k^I((2^I)^\omega)|} \ge 1-\epsilon$. We call $\T$ an $\epsilon$-$k$-approximation of $\varphi$.
\end{definition}

\begin{theorem}
	For a specification given as a deterministic parity automaton $P$, a bound $k$ and a error rate  $0 \leq \epsilon \leq 1$, checking whether there is an implementation $\T$, such that, $\T \models_{k,I}^\epsilon P$ can be done in time polynomial in $|P|$ and exponential in $k$.
\end{theorem}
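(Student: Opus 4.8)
The plan is to recast the question as a maximisation problem over implementations and to solve it on a finite product structure of size $2^{O(k)}\cdot|P|$. First I would observe that the counting domain is finite: by the definition of $I$-$k$-models, every word in $L_k^I((2^I)^\omega)$ is induced by a base in $(2^I)^k$ together with a loop entry point, so $N_k := |L_k^I((2^I)^\omega)| \le 2^{|I|\cdot k}\cdot k = 2^{O(k)}$, and $N_k$ can be computed in time $2^{O(k)}$ by enumerating bases and loop positions and identifying the distinct induced words. Since $\T$ is input-deterministic, each input word determines a unique trace, whose projection recovers the input; hence the numerator of the $\epsilon$-$k$-approximation condition counts exactly the input $k$-lassos whose unique $\T$-trace satisfies $P$. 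Thus $\T \models_{k,I}^\epsilon P$ holds if and only if $\T$ satisfies $P$ on at least $(1-\epsilon)\cdot N_k$ of these finitely many words, and the decision problem becomes: does some implementation satisfy at least $(1-\epsilon)\cdot N_k$ input $k$-lassos?

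Next I would build the arena on which the optimisation lives. Taking the deterministic parity automaton $P$ (size $|P|$) in product with the lasso-prefix automaton $\A_k$ of Theorem~\ref{thm:dfa-prefix} (size $2^{O(k)}$), as in the construction of Theorem~\ref{thm:automaton-synth}, yields a product of size $2^{O(k)}\cdot|P|$ that simultaneously tracks the run of $P$ on the trace and whether the input read so far is still a prefix of a $k$-lasso. The crucial reduction step is to argue that the search over arbitrary, unbounded implementations can be restricted to implementations realisable as output labellings of this product arena: two input histories reaching the same product state agree on all future lasso continuations and on the residual parity obligation, so merging them into a single state never decreases the number of satisfied input $k$-lassos. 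I expect this positional-sufficiency argument to be the main obstacle, since it must simultaneously respect the consistency constraint that the implementation fixes its output as a function of the input history, and the fact that, by Lemma~\ref{lemma:nfa-prefix} and the remark following the definition of $I$-$k$-models, distinct lassos may induce the same word, so the count must be over distinct induced words rather than over representations.

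With bounded implementations in hand, I would compute the optimum by a dynamic program over the tree of input prefixes of length at most $k$ (a $2^{|I|}$-ary tree of depth $k$, hence $2^{O(k)}$ nodes) decorated with the reachable states of $P$. For each node of depth $k$ (a fully read base) and each $P$-state, a bounded sub-problem decides, for every loop entry point, whether the outputs on the periodic continuation can be chosen so that the ultimately periodic trace is accepted by $P$; this is a one-player parity problem on the product of the loop with $P$ and is solvable in time polynomial in $|P|$ and $k$. Propagating these values back through the tree, maximising over the output chosen at each node and summing over the children, while charging each distinct induced word only once through its canonical base and loop entry, yields the maximum number of satisfiable input $k$-lassos. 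Comparing this value with the threshold $(1-\epsilon)\cdot N_k$ decides the problem, and the whole computation runs in time polynomial in $|P|$ and exponential in $k$, matching the claimed bound. The maximum-model-counting procedure discussed in the introduction then provides the symbolic instantiation of exactly this optimisation for implementations of a fixed bounded size.
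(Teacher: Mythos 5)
Your route is genuinely different from the paper's: the paper constructs a nondeterministic parity tree automaton that, in its initial state, guesses the $m$ lassos (where $m$ is the threshold number derived from $\epsilon$ and $k$) on which the implementation must satisfy $P$, checks those lassos via the product construction of Theorem~\ref{thm:automaton-synth}, and then solves the emptiness game of that tree automaton. Your plan instead tries to compute the maximum achievable number of satisfied input $k$-lassos by dynamic programming, and it breaks at its crucial step. At a depth-$k$ node (a fully read base $a_1\ldots a_k$) you decide, \emph{for each loop entry point separately}, whether outputs on the periodic continuation can be chosen so that the resulting trace is accepted by $P$, and you then add up these independently achievable successes. But distinct lassos over the same base share more than the base: their periodic continuations can agree on a nonempty prefix, and since an implementation's output is a function of the input history alone, the outputs on that shared prefix are common to both lassos. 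The per-entry decisions are therefore coupled, and the number of \emph{jointly} satisfiable lassos can be strictly smaller than the number of \emph{individually} satisfiable ones. Concretely, take $k=3$, two distinct input letters $a,b \in 2^I$, and the base $aba$: entry points $1$ and $3$ induce the input words $(aba)^\omega$ and $ab\,a^\omega$, which agree on the first four input letters $abaa$ and diverge only at the fifth. The state $t_4$ reached after $abaa$ is the same for both words, and its output label appears in both traces at position $4$, paired with the diverging fifth input. A deterministic safety (hence parity) automaton $P$ that requires this output to be $o$ when the fifth input is $b$ and $\neg o$ when it is $a$ makes each of the two lassos individually satisfiable but never both; your sum reports two. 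Your caveat about ``charging each induced word once through its canonical base and loop entry'' repairs double counting of identical words, not this consistency problem.

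The gap is repairable, but only by replacing the per-entry checks with a joint optimization: overlay all residual continuations of a base into one finite forest (each residual is purely periodic with period at most $k$, so by Fine and Wilf two residuals either coincide or diverge within $2k$ steps; after that each branch carries a single word and genuinely is a one-player parity problem on a cycle-times-$P$ product), and maximize the count of accepted branches over output labelings of that forest, for each $P$-state at the depth-$k$ node. Relatedly, your step of restricting to implementations that are output labelings of the product of $P$ with $\mathcal{A}_k$ from Theorem~\ref{thm:dfa-prefix} is asserted rather than proven -- you yourself flag it as the main obstacle. It does hold (the $\mathcal{A}_k$-state determines the set of live residual lasso continuations, and the $P$-state determines residual acceptance, so the optimal future count is a well-defined function of the product state), but without that argument, and with the flawed summation above, the proposal does not yet establish the theorem.
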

\begin{proof}
	For a given $\epsilon$ and $k$, we construct a nondeterministic parity tree automaton $\cal N$ that accepts all $\epsilon$-$k$-approximations with respect to  $L(P)$. For $\epsilon$, we can compute the minimal number $m$ of lassos from $L_k^I((2^I)^\omega)$ for which an $\epsilon$-$k$-approximation has to satisfy the specification. In its initial state, the automaton $\cal N$ guesses $m$ many lassos and accepts a transition system if it does not violate the specification on any of these lassos. The latter check is done by following the structure of the automaton constructed for $P$ using Theorem~\ref{thm:automaton-synth}. In order to check whether there is an $\epsilon$-$k$-approximation for $P$, we solve the emptiness game of $\cal N$. The size of $\cal N$ is $(2^k)^{m+1} \cdot|P|$.
	\qed
\end{proof}

\subsection{Symbolic Approach}

In the following, we present a symbolic approach for finding $\epsilon$-$k$-approximations based on maximum model counting. We show that we can build a constraint system and apply a maximum model counting algorithm to compute a transition system that satisfies a specification for a maximum number of input sequences. 

\begin{definition}[Maximum Model Counting \cite{Fremont:EECS-2016-169}]
Let $X,Y$ and $Z$ be sets of propositional variables and $\phi$ be a formula over $X,Y$ and $Z$. Let $x$ denote an assignment to $X$, $y$ an assignment to $Y$, and $z$ an assignment to $Z$. The maximum model counting problem for $\phi$ over $X$ and $Y$ is computing a solution for $\max \limits_{x}\#y.\exists z. \phi(x,y,z). $
\end{definition}

For a specification $\varphi$, bounds $k$ and $n$ on the length of the lassos and size of the system, respectively, we can compute an $\epsilon$-$k$-approximation for $\varphi$ by applying a maximum model counting algorithm to the constraint system given below. It encodes transition systems of size $n$ that have an input lasso of length $k$ that satisfies $\varphi$.
\begin{flalign}
	& \exists \{\tau_{t,i,t'} \mid t,t' \in T, i \in 2^I\}.~ \exists \{o_t \mid t \in T, o \in O\}.  & \\
	& \exists \{i_j \mid i \in I, 0 \leq j < k\}. ~\exists\{l_j \mid 0 \leq j < k\}.&\\
	& \exists \{x_j^i \mid x \in I,  0 \leq i,j < k\} &\\
	& \exists \{o_j \mid o \in O, 0 \leq j < n\cdot k\}.\\  
	& \exists \{t_j \mid t \in T, 0 \leq j < n\cdot k\}. &\\
	& \exists \{l'_j \mid 0 \leq j <n\cdot k\}. &\\
	& \varphi_{\text{det}} \wedge \varphi_{\text{lasso}}\wedge \varphi_{\in \T}^{n,k} \wedge  \llbracket \varphi \rrbracket_0^{ k, n\cdot k} \wedge \llbracket k \rrbracket_0 &
\end{flalign}
To check the existence of a $\epsilon$-$k$-approximation, we maximize over the set of assignment to variables that define the transition system (line 11) and count over variables that define input sequences of the environment given by lassos of length $k$. As two input lassos of the same length may induce the same infinite input sequence, we count over auxiliary variables that represent unrollings of the lassos instead of counting over the input propositions themselves (line 13). 

The formulas $\varphi_{\text{det}}$, $\varphi_{\text{lasso}}$, $\varphi_{\in \T}^{n,k}$ and $\llbracket \varphi \rrbracket_0^{k,n\cdot k} $ are defined as in the previous section. 
The formula $\llbracket k \rrbracket _0 $ is defined over that variables in line (13) and makes sure that input lasso that represent the same infinite sequence are not counted twice by unrolling the lasso to size $2k$.

\begin{theorem}
	For a specification given as an LTL formula $\varphi$, and bounds $k$ and $n$, and an error rate $\epsilon$,  the propositional formula $\phi$ defined above is of size $O(|\varphi| + n^2 + k^2)$. The number of variables of $\phi$ is equal to $n\cdot(n\cdot 2^{|I|} +|O|) + k\cdot(k\cdot |I|+ |I|+1) +n\cdot k(|O|+n+1)$.
\end{theorem}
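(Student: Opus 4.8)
The plan is to establish the size bounds by a careful accounting of the two components of the formula $\phi$: the encoding of the transition system together with its lasso-unrolling, and the encoding $\llbracket \varphi \rrbracket$ of the LTL specification. First I would recall, from the previous section, how each of the subformulas $\varphi_{\text{det}}$, $\varphi_{\text{lasso}}$, $\varphi_{\in \T}^{n,k}$, $\llbracket \varphi \rrbracket_0^{k,n\cdot k}$, and $\llbracket k \rrbracket_0$ is built, and determine for each one both the number of fresh propositional variables it introduces and its syntactic size as a Boolean formula. The overall variable count is then simply the sum over the six existential blocks in lines (8)--(13), which are already spelled out in the quantifier prefix of the constraint system; I would verify that this sum equals $n\cdot(n\cdot 2^{|I|}+|O|) + k\cdot(k\cdot|I| + |I| + 1) + n\cdot k\,(|O| + n + 1)$ by matching each block to its corresponding term.

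For the variable count I would proceed block by block. The transition-function variables $\tau_{t,i,t'}$ contribute $|T|^2\cdot 2^{|I|} = n^2\cdot 2^{|I|}$, and the output-labeling variables $o_t$ contribute $n\cdot|O|$; together these give the term $n\cdot(n\cdot 2^{|I|}+|O|)$. The lasso-description variables $i_j$ and $l_j$ in line (9) contribute $k\cdot|I| + k$, and the unrolling variables $x_j^i$ in line (10) contribute $k^2\cdot|I|$; grouping these yields $k\cdot(k\cdot|I| + |I| + 1)$. Finally the trace variables along the unrolled run of length $n\cdot k$ — the outputs $o_j$, the state indicators $t_j$, and the auxiliary loop variables $l'_j$ of lines (11)--(13) — contribute $n\cdot k\cdot|O| + n\cdot k\cdot n + n\cdot k$, which is exactly $n\cdot k\,(|O| + n + 1)$. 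Summing the three grouped contributions gives the stated total.

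For the formula-size bound $O(|\varphi| + n^2 + k^2)$ I would argue that each structural constraint is a conjunction whose size is dominated by one of these three terms. The determinism and transition-consistency constraints $\varphi_{\text{det}}$ and $\varphi_{\in \T}^{n,k}$ range over pairs of states and inputs and so are $O(n^2)$ (with the input alphabet treated as a constant factor absorbed into the $O$-notation), while $\varphi_{\text{lasso}}$ and the unrolling constraint $\llbracket k \rrbracket_0$ relate positions within the lassos and are $O(k^2)$. The specification encoding $\llbracket \varphi \rrbracket_0^{k,n\cdot k}$ unfolds the LTL formula along the bounded trace, and by the standard bounded-semantics encoding its size is linear in $|\varphi|$ (with the bounds contributing only polynomial factors already accounted for by the $n^2$ and $k^2$ terms). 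The sum of these pieces is therefore $O(|\varphi| + n^2 + k^2)$.

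The main obstacle I anticipate is pinning down the size of $\llbracket \varphi \rrbracket_0^{k,n\cdot k}$ precisely enough to justify the clean linear dependence on $|\varphi|$: a naive bounded unrolling of LTL over a trace of length $n\cdot k$ would appear to incur a factor of $n\cdot k$ per subformula, so the claimed $O(|\varphi|)$ bound must rely on the sharing of subformula instances across trace positions (for example, by introducing one auxiliary variable per subformula-position pair rather than recopying the formula), so that the per-position blowup is charged to the variable count rather than to the formula size. I would therefore need to be careful to invoke the exact encoding used in the previous section, where these trace variables are precisely the $o_j$, $t_j$, and $l'_j$ of lines (11)--(13), and to confirm that the resulting Boolean constraints are only linearly many in $|\varphi|$ after this sharing, with the remaining dependence on the bounds already subsumed by the $n^2$ and $k^2$ terms.
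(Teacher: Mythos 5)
Your accounting of the variable count is correct, and this half of the theorem can indeed be verified by direct inspection: the stated expression is nothing more than an enumeration of the quantifier prefix, and your three groupings ($n^2\cdot 2^{|I|}+n\cdot|O|$ for the transition system, $k\cdot|I|+k+k^2\cdot|I|$ for the input lasso and its unrolling, $n\cdot k\cdot|O|+n^2\cdot k+n\cdot k$ for the run variables) match it term for term. Note that the paper states this theorem without any proof, so for this half your argument is as complete as anything the paper offers.

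The formula-size half contains a genuine gap, which you correctly anticipate but then resolve in a way that cannot work. You propose to make $\llbracket\varphi\rrbracket_0^{k,n\cdot k}$ linear in $|\varphi|$ by introducing one auxiliary variable per subformula--position pair and ``charging the per-position blowup to the variable count rather than to the formula size.'' But the theorem's variable count is stated as an exact equality and contains no term depending on $|\varphi|$, so there is no room in it for the $\Theta(|\varphi|\cdot n\cdot k)$ auxiliary variables this sharing requires; and even if such variables were added, each one needs a defining constraint, so Tseitin-style sharing still leaves the formula of size $\Theta(|\varphi|\cdot n\cdot k)$, which is not $O(|\varphi|+n^2+k^2)$ once $|\varphi|$ grows relative to $n$ and $k$. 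In fact the two claims of the theorem are in tension independently of how $\llbracket\varphi\rrbracket_0^{k,n\cdot k}$ is encoded: a propositional formula is at least as large as the number of distinct variables occurring in it, and even granting (as you do) that $|I|$ and $|O|$ are treated as constants, the stated count contains the $n^2\cdot k$ variables $t_j$, and $n^2 k\notin O(n^2+k^2)$ when $n$ and $k$ grow together. So the size bound can only be read loosely, and any honest proof of it must rest on the concrete definitions of $\varphi_{\text{det}}$, $\varphi_{\text{lasso}}$, $\varphi_{\in\T}^{n,k}$ and $\llbracket\varphi\rrbracket_0^{k,n\cdot k}$ given in Section 5 of the paper --- definitions your blind argument necessarily treats generically, and for which a generic BMC-style unrolling yields only a multiplicative bound of the form $O(|\varphi|\cdot n\cdot k)$, not the claimed additive one.
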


\section{Experimental Results}
\label{sec:experiments}
We implemented the symbolic encodings for the exact and approximate synthesis methods, and evaluated our approach on a bounded version of the greedy arbiter specification given in Section~\ref{sec:intro},  and another specification of a round-robin arbiter. 
The round-robin arbiter is defined by the specification: $$\LTLsquare \LTLdiamond w \rightarrow \LTLsquare \LTLdiamond g_1 \wedge \LTLsquare \LTLdiamond g_2 \wedge  \LTLsquare( \neg w \rightarrow \LTLcircle (\neg g_1 \wedge \neg g_2)) \wedge \LTLsquare(\neg g_1 \vee \neg g_2) $$
This specification is realizable, with transition systems of size at least 4. We used our implementation to check whether we can find approximative solutions with smaller sizes. We used the tool CAQE~\cite{Rabe+Tentrup/15/CAQE}  for solving the QBF instances and the tool MaxCount~\cite{Fremont:EECS-2016-169} for solving the approximate synthesis instances.

The results are presented in Table~\ref{table:experiments}. As usual in synthesis, the size of the instances grows quickly as the size bound and number of processes increase. Inspecting the encoding constraints shows that the constraint for the specification is responsible for more than 80\% of the number of gates in the encoding. The results show that, using the approach we proposed, we can synthesize implementations for unrealizable specifications by bounding the environment. The results for the approximate synthesis method further demonstrate that for the unrealizable cases one can still obtain approximative implementations that satisfy the specification on a large number of input sequences.
	\begin{table}[t]
\centering
\scalebox{0.85}{
\begin{tabular}{|c||c|c|c|c||c|c|c|c||c|c|c|c|}
	\hline
	\multicolumn{5}{|c||}{instance} & \multicolumn{4}{c||}{QBF} & \multicolumn{4}{c|}{MaxCount} \\
	\hline
	Spec. & Proc. & \#States & Bound & Result & \#Gates & $\forall$ & $\exists$ & time & \#Max & \#Count & rate & time \\
	\hline
	\multirow{3}{1.8cm}{~~~Round-\\~~~Robin\\~~~Arbiter} & 2 & 2 & 4 & Unreal. & 15556 &  48& 12 & 9.91s & 12& 8& 0.5 & 26s\\
	\cline{2-13}
	  & 2 & 3 & 2 & Unreal. & 5338 & 40  & 24 & 2.45s & 24& 4& 0.88 & 161s\\ 
	\cline{2-13}
	 & 2 & 4 & 2 & Real. & 13414 & 60 & 12 & 12.15s & 40& 4& 0.88 & 283s\\
	\hline
	\hline
	 & 1 & 2 & 2 & Real. & 1597 & 20 & 10 & 0.41s & 10 & 4 & 1.0 & 0.79s \\
	\cline{2-13}
	& 1 & 2 & 3 & Unreal. & 4749 & 30 & 10 & 1.95s & 10 & 6 & 0.88 & 3.86s \\
	\cline{2-13}
	& 1 & 3 & 3 & Unreal. & 16861 & 48 & 21 & 17.26s & 21 & 6 & 0.88 & 20.83s \\
	\cline{2-13}
	Greedy& 1 & 4 & 3 & Real. & 43692	 &  78 & 36  & 3m7.44s & 36 & 6 & 1.0 & 2m43s \\
	\cline{2-13}
	Arbiter& 1 & 4 & 4 & - & 169829 & 104 & 36 & TO & 36 & 8& -& TO\\
	\cline{2-13}
	& 2 & 4 & 2 & Real. & 24688 & 62 & 72 & 1m.24s &72 & 6 & -& TO\\
	\cline{2-13}
	& 2 & 4 & 3 & Unreal. & 103433 & 93 & 72 & 27m15.2 &72 & 12& -& TO\\
	\cline{2-13}
	& 3 & 2 & 2 & Unreal. & 3985 & 93 & 72 & 1.39s & 38 & 8 & 0.65 & 4.18s\\
	\hline
\end{tabular}
}
\vspace{0.2cm}
\caption{\small Experimental results for the symbolic approaches. The rate in the approximate approach is the rate of input lassos on which the specification is satisfied.\looseness=-1}\label{table:experiments}
\end{table}

\section{Conclusion}
In many cases, the unrealizability of a specification is due to the assumption that the environment has unlimited power in producing inputs to the system.  In this paper, we have investigated the problem of synthesizing implementations under bounded environment behaviors. We have presented algorithms for solving the synthesis problem for bounded lassos and the synthesis of approximate implementations that satisfy the specification up to a certain rate. 

We have also provided polynomial encodings of the problems into quantified  Boolean formulas and maximum model counting instances. Our experiments demonstrate the principal feasibility of the approach. Our experiments also show that the instances can quickly become large. While this is a common phenomenon for synthesis, there clearly is a lot of room for optimization and experimentation with both the solvers for quantified Boolean expressions and for maximum model counting. %Furthermore, our approach has the potential to benefit from future improvements of these solvers.

\bibliographystyle{plain}
\bibliography{bib.bib}

%\newpage 
%\appendix 
%\section{Appendix}
%\input{appendix}
\end{document}